\newtheorem{theorem}{Theorem}
\newtheorem{lemma}{Lemma}
\newtheorem{proposition}{Proposition}
\newtheorem{corollary}{Corollary}
\newcommand{\abs}[1]{\ensuremath{\left|#1\right|}}
\newcommand{\bY}{\mathbf{Y}}
\newcommand{\bX}{\mathbf{X}}
\newcommand{\bbeta}{\boldsymbol{\beta}}
\DeclareMathOperator\sign{sign}
\title[]{Bayesian variable selection and estimation based on global-local shrinkage priors}
\author{Xueying Tang, Xiaofan Xu, Malay Ghosh and Prasenjit Ghosh}
\date{}                                           % Activate to display a given date or no date
\begin{document}

\begin{abstract}
In this paper, we consider Bayesian variable selection problem of linear regression model with global-local shrinkage priors on the regression coefficients. We propose a variable selection procedure that select a variable if the ratio of the posterior mean to the ordinary least square estimate of the corresponding coefficient is greater than $1/2$. Under the assumption of orthogonal designs, we show that if the local parameters have polynomial-tailed priors, our proposed method enjoys the oracle property in the sense that it can achieve variable selection consistency and optimal estimation rate at the same time. However, if, instead, an exponential-tailed prior is used for the local parameters, the proposed method does not have the oracle property.
\end{abstract}
\maketitle

\section{Introduction}
The objective of this article is simultaneous variable selection and estimation in linear regression models under global-local shrinkage priors and a suitable thresholding. Selection of the best available model among a set of candidate models is extremely useful for most statistical applications. The problem often reduces to the choice of a subset of variables from all predictive variables in a regression setting. Linear regression models continue to occupy a prominent place in variable selection problems due to their interpretability as well as analytical tractability. Throughout this paper, we consider classical linear regression models with response vector $\mathbf{Y} = (y_1, \ldots, y_n)$ and a set of predictors $\mathbf{x}_1, \ldots, \mathbf{x}_p$. The target is to fit a model of the form
\begin{equation}
\mathbf{Y} = \sum_{i=1}^{p} \mathbf{x}_i\beta_i + \boldsymbol{\epsilon} = \mathbf{X}\boldsymbol{\beta} +\boldsymbol{\epsilon},
\end{equation}
where $\mathbf{X} = (\mathbf{x}_1, \ldots, \mathbf{x}_p)$, $\boldsymbol{\beta} = (\beta_1, \ldots, \beta_p)^T$, and $\boldsymbol\epsilon \sim N(\mathbf{0}_n, \sigma^2\mathbf{I}_n)$. The goal of variable selection is to pick only a subset of predictors that are relevant for predicting a given response. 

Historically, penalized regression methods have been very successful for variable selection. In its general form, the problem reduces to minimization of the objective function
\begin{equation}\label{eq:obj_fun}
f(\boldsymbol{\beta}) = \| \mathbf{Y} - \mathbf{X}\boldsymbol{\beta} \|^2 + \lambda \sum_{i=1}^p u(\beta_i),
\end{equation}
where $\lambda$ is the penalty parameter. The choice $u(z) = z^2$ leads to the ridge \citep{marquardt1975ridge} estimator, while $u(z) = | z |$ leads to the lasso \citep{tibshirani1996regression} estimator of $\boldsymbol{\beta}$. One of the advantages of the lasso estimator is that it can produce exact zero estimates for some of the regression coefficients. Despite this distinctive feature, the lasso method has some limitations in its original form. \citet{zou2006adaptive} showed that lasso estimators could not achieve consistent variable selection and optimal estimation rate at the same time. He proposed instead the adaptive lasso which heavily penalized zero coefficients and moderately penalized large coefficients using data dependent weights for different coefficients. Specifically, the adaptive lasso estimates are found as
\begin{equation}
\hat{\boldsymbol{\beta}}^{\textrm{adap}} = \underset{\boldsymbol{\beta}}{\operatorname{argmin}}  \left\{ \|\mathbf{Y} - \mathbf{X}\boldsymbol{\beta} \|^2 + \lambda \sum_{i=1}^p | \beta_i | / | \hat{\beta}_i |^{\gamma} \right\}
\end{equation}
for some $\gamma > 0$ where $\hat{\beta}_i$ is the least squares estimator of $\beta_i$. The adaptive lasso enjoys the oracle property in the sense that it achieves simultaneously variable selection consistency and asymptotic normality with $\sqrt{n}$ convergence rate. One important feature, implicit in the proof of Theorem 2 of \cite{zou2006adaptive}, is that $\hat{\beta}_i^{\mathrm{adap}} / \hat{\beta}_i \overset{P}{\rightarrow} 0$ or $1$ according as the true coefficient value equals or is different from zero. This property is the main motivation for us to develop a new thresholding method in a Bayesian context.

The history of Bayesian variable selection goes a long way back starting with \citet{mitchell1988bayesian}, where a spike and slab prior is used for the coefficients. The spike part of their prior placed probability mass at zero to exclude irrelevant variables, while the slab part used a uniform distribution with a large symmetric range to include the important variables. Since then, many priors invented for variable selection possess the spike-and-slab feature, although they differ in the choice of the distributions for the two parts. \citet{george1993variable} proposed stochastic search variable selection (SSVS) which assumed $\beta_i$ to be a mixture of two normal distributions with different variances. The spike part is the one with a smaller variance while the slab part is the one with a much larger variance. Different from the above, \citet{geweke1996variable} used positive mass at 0 for the spike part and a normal distribution for the slab part. Another example is \citet{narisetty2014bayesian}. \citet{xu2015bayesian} also considered spike-and-slab priors, but used median thresholding to select variables in the group lasso framework.
%It should be noted though that it is more appropriate to use a heavy-tailed prior for the slab part to prevent over-shrinking of the large coefficients.

%Since spike-and-slab priors are usually a mixture of two distributions, latent Bernoulli random variables can be introduced to indicate whether the coefficients are from the spike part or the slab part. The posterior probabilities of the indicators being 1 (or 0) can be used to perform variable selection naturally. \citet{narisetty2014bayesian} proved variable consistency of such variable selection procedures. Since the element-wise posterior median of the regression coefficients with spike-and-slab priors can have exact zeros, median thresholding can also be used to select relevant variables. Some examples are \citet{johnstone2004needles} and \citet{xu2015bayesian}.

An alternative approach to Bayesian variable selection is to use shrinkage priors for the regression coefficients. An early example of this approach is the Bayesian lasso introduced by \citet{park2008bayesian}. They performed a full Bayesian analysis analogous to lasso, interpreting $\| \mathbf{Y} - \mathbf{X}\boldsymbol{\beta} \|^2$ in \eqref{eq:obj_fun} as the negative of a multiple of the log-likelihood and the penalty function $\lambda \sum_{i=1}^p |\beta_i|$ as the negative of a double exponential prior for $\beta_i$. Following a similar idea, \citet{li2010bayesian} proposed Bayesian elastic net.

Unlike the spike-and-slab priors, shrinkage priors cannot naturally produce exact zero estimates of the regression coefficients with positive probability. Thus a critical question to answer when using shrinkage priors for variable selection is how to actually select relevant variables. \citet{li2010bayesian} presented the credible interval criterion which selects predictor $\mathbf{x}_i$ if the credible interval of $\beta_i$ does not cover 0. A criterion called scaled neighborhood criterion is also considered in \citet{li2010bayesian}. It selects predictors with posterior probability of belonging to $[-\sqrt{\mathrm{Var}(\beta_j \, |\, \mathbf{Y})}, \sqrt{\mathrm{Var}(\beta_j \, | \, \mathbf{Y})}]$ less than a certain threshold. These authors did not address the issue of any oracle property of their procedures.
\citet{bondell2012consistent} employed the conjugate normal priors for $\beta_i$ and used sparse solutions within posterior credible regions to perform selection. They gave a theoretical proof of variable selection consistency of their method. Recently, \citet{hahn2015decoupling} proposed selection of variables by minimizing the ``decoupled shrinkage and selection'' loss function after finding the posterior mean of $\bbeta$. However, a surrogate optimization problem has to be used since the original one is intractable in the presence of moderate to large number of predictors.
%Indeed, as we shall show later in this paper as part of a more general result, the procedure in \citet{park2008bayesian} lacks oracle property in the sense that the selected vector of regression coefficients does not attain asymptotic normality at an appropriate $n^{1/2}$ rate. 

In this paper, we consider the problem of variable selection and estimation using global-local shrinkage priors. The prior of \citet{park2008bayesian} came as special cases. Specifically, we assume the prior distribution of $\beta_i$ is a scale mixture of normals:
$$\beta_i \, | \, \overset{ind}{\sim} \mbox{N}(0, \sigma^2\gamma_i\tau), ~ \gamma_i \overset{ind}{\sim} \pi(\gamma_i).$$ These priors approximate the spike-and-slab priors, but instead are symmetric, unimodal and absolutely continuous. They place significant probability mass around zero and have heavy tails to signify the inclusion of relevant variables. The local parameters $\gamma_i$ control the degree of shrinkage of each individual $\beta_i$ while the global parameter causes an overall shrinkage. We give a list of such priors in a later section. The list includes not only the now famous horseshoe prior of \citet{carvalho2010horseshoe}, but several other priors considered, for example, in \citet{griffin2010inference,griffin2011bayesian}, \citet{polson2010shrink,polson2012half} and \citet{armagan2011generalized,ADL2012}. We also find it convenient to classify the priors $\pi(\gamma_i)$ into two subclasses: those with exponential tails and those with polynomial tails. We propose a thresholding procedure to select relevant variables in the model. It turns out that the theoretical properties of our proposed method are closely related to the tails of $\pi(\gamma_i)$. As we will show in the subsequent sections, if polynomial-tailed priors are used, the proposed method attains the oracle property for certain choice of $\tau$ in the same sense as the adaptive lasso. In contrast, the exponential-tailed priors, while attaining variable consistency for some choice of $\tau$, will fail to attain asymptotic normality at the $\sqrt{n}$ rate.

The outline of the remaining sections is as follows. The general class of shrinkage priors and the thresholding procedure are described in Section 2. In Section 3, we present the theoretical properties of the proposed method for orthogonal designs. Section 4 contains some simulation results. Some final remarks are made in Section 5. The technical proofs of the results are deferred to the Appendix.

We want to highlight some of the findings of our paper and also compare and contrast the same with some of the other variable selection procedures. The thresholding approach used in our paper attains simultaneous variable selection and estimation with exact zero estimates of some of the regression coefficients similar to the original lasso. While variable selection consistency has been addressed in a large number of papers including situations where $p\gg n$, the asymptotic normality of the non-zero vector of regression coefficients, to our knowledge, has not been considered earlier in this generality. Moreover, although the exponential-tailed priors including those of \citet{park2008bayesian} have been addressed quite frequently in the literature, their asymptotic non-optimality, as pointed by us, has not been addressed before. Finally, despite the fact that the oracle properties are proved only for orthogonal designs, the proposed selection mechanism works for non-orthogonal designs as well as shown in the simulations of Section 4.

\section{Global local shrinkage priors and proposed Method}
For clarity, we reiterate the model considered in this article:
\begin{align} \tag{M1}\label{eq:model1}
\mathbf{Y} &\sim N(\mathbf{X}\boldsymbol\beta, \sigma^2\mathbf{I}_n),\\
\tag{M2}\label{eq:model2} \beta_i | \gamma_i & \overset{ind}{\sim} N(0, \sigma^2\gamma_i \tau),~ i= 1, \ldots, p,\\ 
\gamma_i & \overset{ind}{\sim} \pi(\gamma_i),~ i = 1, \ldots, p.\tag{M3}\label{eq:model3}
\end{align}
Throughout this article, we assume $p = p_n \leq n$. 
%Let $\bbeta^0 = (\beta_1^0, \ldots, \beta_p^0)^T$ denote the true value of the regression coefficients $\bbeta$ and $\mathcal{A} = \{i : \beta_i^0 \neq 0\}$ denote the set of true relevant variables. Then $p_0 = |\mathcal{A}|$ is the number of true relevant variables, which we assume does not depend on $n$.

Many priors in Bayesian literature can be expressed in the form of scale mixture of normals, as in \eqref{eq:model2} and \eqref{eq:model3}. Table \ref{table:priors}, given later in Section 3, presents a list of such priors (of $\beta_i$) and the corresponding form of $\pi(\gamma_i)$. 
By employing two levels of parameters to express the variances in \eqref{eq:model2}, the global-local shrinkage priors assign large probabilities around zero while assigning non-trivial probabilities to values far from zero. The global parameter $\tau$ tends to shrink all $\beta_i$'s towards zero. At the same time, the local parameters $\gamma_i$ control the degree of shrinkage of each individual $\beta_i$. If $\pi(\gamma_i)$ is appropriately heavy tailed, the coefficients of important variables can be left almost unshrunk.

%motivation
In the same spirit as \citet{park2008bayesian}, placing a prior on $\beta_i$ is closely related with adding a penalty term of $\beta_i$ to the ordinary least square objective function, so the properties of penalized regression estimators can shed light on the features of Bayesian estimator of $\beta_i$. The proof of Theorem 2 in \citet{zou2006adaptive} implies that under mild conditions, 
\begin{align}
\frac{\hat{\beta}_{i}^{adap}}{\hat{\beta}_{i}}\stackrel{p}{\rightarrow}\begin{cases}
0, & \text{when}\beta_{i}^{0}=0,\\
1, & \text{when }\beta_{i}^{0}\neq0,
\end{cases}
\label{eq:ratio_converge}
\end{align}
where $\beta_i^0$ is the true value of $\beta_i$ and $\hat{\beta}_{i}$ is the ordinary least square estimator of $\beta_{i}$. This indicates that the adaptive lasso estimator for the coefficient of an irrelevant variable converges to zero faster than the least square estimator. In fact, \eqref{eq:ratio_converge} holds by replacing the adaptive lasso estimator with any penalized regression estimator that has the oracle property as given in \citet{zou2006adaptive}. Many of these estimators can also be interpreted as posterior modes of priors specified in \eqref{eq:model1}-\eqref{eq:model3}. Due to the asymptotic closeness of posterior means and posterior modes under such priors, one can threshold the ratio of posterior mean and least square estimator to obtain an oracle variable selection procedure even though the posterior mean is not sparse. Motivated by this, we propose to select predictor $\mathbf{x}_{i}$ if 
\begin{equation}
\abs{\frac{\hat{\beta}_{i}^{PM}}{\hat{\beta}_{i}}}>\frac{1}{2},\label{eq:Criterion}
\end{equation}
where $\hat{\beta}_{i}^{PM}$
is the posterior mean of $\beta_{i}$ under certain shrinkage
prior. We refer to this procedure as Half Thresholding (HT) and define the HT estimator of $\beta_i$ as 
$$ \hat{\beta}_i^{HT}=\hat{\beta}_i^{PM}I\left(\abs{\frac{\hat{\beta}_{i}^{PM}}{\hat{\beta}_{i}}}>\frac{1}{2}\right). $$

Our proposed HT procedure is simple and easy to implement. Once the posterior mean and the ordinary least square estimate of $\bbeta$ are obtained, variable selection can be performed without any extra optimization step as required for example in \citet{bondell2012consistent} and \citet{hahn2015decoupling}. Besides its simplicity, as we will show in the next section, the HT procedure enjoys oracle properties for orthogonal designs if the global parameter $\tau$ and the prior of $\gamma_i$ are chosen appropriately.

% theoretical properties
\section{Theoretical Results}
We consider two general types of priors $\pi(\gamma_i)$ given by 
\begin{align}
\pi\left(\gamma_{i}\right)&=\gamma_{i}^{-a-1}L\left(\gamma_{i}\right),~a>0,\tag{P}\label{eq:polynomial}\\
\pi\left(\gamma_{i}\right)&=\exp\left(-b\gamma_{i}\right)L\left(\gamma_{i}\right),~b>0,\tag{E}\label{eq:exponential}
\end{align}
where $L\left(\cdot\right)$ is a nonnegative slowly varying function in Karamata's sense \citep[p.~6]{bingham1987regular} defined on $\left(0,\infty\right)$. We will call the priors in the form of \eqref{eq:polynomial} and \eqref{eq:exponential} polynomial-tailed priors and exponential-tailed priors respectively. As we will show later in this section that the theoretical performances of the HT method is closely related with the tails of the prior of $\gamma_i$. Table \ref{table:priors} provides a list of commonly used scale mixture priors of $\beta_i$ and the corresponding form of $\pi(\gamma_i)$. The Class column gives which class the prior belongs to. The corresponding form of $L$ is given in the last column. The half-hyperbolic and positive logistic distribution are included in the list as examples of exponential-tailed distributions other than the exponential distributions, although they have not been used as priors in literature. 

\renewcommand{\arraystretch}{2}
\begin{table}[ht]
\scriptsize
\centering
\begin{tabular}{cccc}
  \toprule 
Prior & $\pi(\gamma_i)/C$ & Class &  $L(\gamma_i)/C$  \\
  \midrule
%Normal-Jeffereys & $\frac{1}{\gamma_j}$ & $\frac{1}{|\beta_j|}$  & -\\
Double-exponential & $\exp\{-b\gamma_i\}$ & E & 1 \\
Half-hyperbolic & $\exp\left\{-b\sqrt{1+\gamma_i^2}\right\}$ & E & $\exp\left\{b\gamma_i-b\sqrt{1+\gamma_i^2}\right\}$ \\
Positive Logistic & $\exp(b\gamma_i)\{1+\exp(b\gamma_i)\}^{-2}$ & E & $\exp(2b\gamma_i)\{1+\exp(b\gamma_i)\}^{-2}$\\
Student's T & $\gamma_i^{-a-1}\exp(-{a}/{\gamma_i})$ & P & $\exp(-a/\gamma_i)$ \\
Horseshoe & $\gamma_i^{-1/2}(1+\gamma_i)^{-1}$ & P & $\gamma_i / (1 + \gamma_i)$ \\
Horseshoe+ & $ \gamma_i^{-1/2}(\gamma_i-1)^{-1}\log(\gamma_i)$ & P & $\gamma_i (\gamma_i - 1)^{-1}\log(\gamma_i)$\\
NEG & $ \left(1+\gamma_i\right)^{-1-a} $ & P & $\{\gamma_i/(1+\gamma_i)\}^{a+1}$ \\
TPBN & $\gamma_i^{u-1}(1+\gamma_i)^{-a-u}$ & P & $\{\gamma_i/(1+\gamma_i)\}^{a+u}$ \\
GDP & $\int_0^\infty \frac{\lambda^2}{2} \exp\left(-\frac{\lambda^2\gamma_i}{2}\right)\lambda^{2a-1}\exp(-\eta\lambda)d\lambda$ & P & $\int_0^{\infty} t^{a} \exp(-t-\eta\sqrt{2t/\gamma_i})dt$ \\
HIB & {$\gamma_i^{u-1}(1+\gamma_i)^{-(a+u)}\exp\left\{-\frac{s}{1+\gamma_i}\right\}\left\{\phi^2+\frac{1-\phi^2}{1+\gamma_i}\right\}^{-1}$} & P &  {$\{\gamma_i/(1+\gamma_i)\}^{a+u}\exp\left\{-\frac{s}{1+\gamma_i}\right\}\left\{\phi^2+\frac{1-\phi^2}{1+\gamma_i}\right\}^{-1}$}\\
   \bottomrule
\end{tabular}

\caption{A list of scale mixture of normals shrinkage priors of $\beta_i$. $C$ is a generic constant. NEG: normal exponential gamma priors\citep{GB2005}, TPBN: three parameter beta normal priors \citep{armagan2011generalized}, GDP: generalized double Pareto priors \citep{ADL2012}, HIB: hypergeometric inverted beta priors \citep{polson2012half}.} 
\label{table:priors}
\end{table}
\renewcommand{\arraystretch}{1}

In this section, we will assume the design matrix is orthognal, that is $\mathbf{X}^T\mathbf{X}=n\mathbf{I}_p$. With this assumption, $$E(\beta_i \, | \, \gamma_i, \tau, \mathbf{Y}) = \frac{n\tau\gamma_i}{n\tau\gamma_i + 1} \hat{\beta}_i = (1-s_i)\hat{\beta}_i,$$ where $s_i = 1/(1+ n\tau\gamma_i)$, is the shrinkage factor. By law of iterated expectations, $$\hat{\beta}_i^{\text{PM}} = E(\beta_i \, | \, \mathbf{Y}) = (1-E(s_i \, | \, \mathbf{Y}))\hat{\beta}_i.$$ 
Therefore, with the orthogonal design matrix assumption, the selection criterion \eqref{eq:Criterion} of the proposed method simplifies to 
\begin{equation} \label{eq:Criterion_simple}
1-E(s_i \, | \, \mathbf{Y}) > 1/2.
\end{equation}
A similar procedure was considered by \citet{ghosh2015asymptotic} in the multiple testing context.
%In the context of multiple testing, \citet{ghosh2015asymptotic} introduced the thresholding procedure that rejects the null hypotheses if $E(1-\kappa_i \, | \, \bY) > 1/2$. They proved this procedure with polynomial-tailed priors describe in \eqref{eq:polynomial} enjoys some asymptotic optimality properties, although not quite in the sense as proposed in this paper. Since our $s_i$ acts in a similar role as the $\kappa_i$ in their work, it is natural to conjecture that the HT procedure inherits similar properties in a variable selection context.

Following \citet{fan2001variable} and \citet{zou2006adaptive}, we say a variable selection procedure is oracle if it results in both variable selection consistency and optimal estimation rate. 
Let $\mathcal{A} = \{j: \beta_j^0 \neq 0\}$ and $\mathcal{A}_n = \{j: \hat{\beta}_j^{HT} \neq 0\}$. 
The variable selection consistency means $$\lim_{n \rightarrow \infty}P(\mathcal{A}_n = \mathcal{A})=1,~\text{as}~n\rightarrow \infty,$$ 
while the optimal estimation rate means $$\sqrt{n}(\hat{\bbeta}_{\mathcal{A}}^{{HT}}-\bbeta_{\mathcal{A}}^0) \overset{d}{\rightarrow} N(0, \sigma^2\mathbf{I}_{p_0}),~\text{as}~n\rightarrow\infty,$$ where $p_0$ is the cardinality of $\mathcal{A}$ and it does not depend on $n$.

Another thing to clarify before the presentation of our theoretical results is the treatment of the global parameter $\tau$. In \citet{datta2013asymptotic} and part of the results of \citet{ghosh2015asymptotic}, $\tau$ was treated as a tuning parameter. \citet{carvalho2010horseshoe} considered a full Bayesian treatment and a half-Cauchy prior for the global parameter. \citet{ghosh2015asymptotic} also provided some results when an empirical Bayes estimate of the global parameter is used. In this article, we treat $\tau$ as a tuning parameter or assume a hyper-prior for it. To distinguish the two treatments, we will write $\tau$ as $\tau_n$ when it is a tuning parameter.

% tuning parameter
\subsection{Properties of shrinkage factors}

By \eqref{eq:Criterion_simple}, the HT procedure is closely related with the shrinkage factor $s_i$, so we present its properties first.

\begin{proposition}\label{prop:s_consistency} Suppose the prior of $\gamma_i$ is proper. 
For $i \not\in \mathcal{A}$, if $n\tau_{n}\rightarrow0$, as $n\rightarrow\infty$, then 
$E(1 - s_{i} \, | \, \tau_n, \bY) \overset{p}{\rightarrow} 0$ as $n\rightarrow\infty$.
For $i \in \mathcal{A}$,
\begin{enumerate}
\item if $\gamma_i$ has a polynomial-tailed prior described in \eqref{eq:polynomial} and $n\tau_n \rightarrow 0$, $\log(\tau_n)/n \rightarrow 0$ as $n\rightarrow\infty$, then $E(1-s_i\, |\, \tau_n, \bY) \overset{p}{\rightarrow} 1$, as $n \rightarrow \infty$.
\item if $\gamma_i$ has an exponential-tailed prior described in \eqref{eq:exponential} and $n\tau_n \rightarrow 0$ and $n^2\tau_n \rightarrow \infty$ as $n\rightarrow\infty$, then $E(1 - s_i \, |\, \tau_n, \bY) \overset{p}{\rightarrow} 1$, as $n\rightarrow \infty$.
\end{enumerate}
\end{proposition}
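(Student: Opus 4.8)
The plan is to use the orthogonality $\mathbf{X}^{T}\mathbf{X}=n\mathbf{I}_{p}$ to reduce the problem to one dimension: the $\hat\beta_{i}$ are independent with $\hat\beta_{i}\mid\beta_{i}^{0}\sim N(\beta_{i}^{0},\sigma^{2}/n)$, and the posterior of $\gamma_{i}$ depends on $\mathbf{Y}$ only through $\hat\beta_{i}$, so $E(s_{i}\mid\tau_{n},\mathbf{Y})=E(s_{i}\mid\tau_{n},\hat\beta_{i})$. Integrating out $\beta_{i}$ gives $\hat\beta_{i}\mid\gamma_{i}\sim N(0,\sigma^{2}/n+\sigma^{2}\tau_{n}\gamma_{i})$, and after the change of variable $u=n\tau_{n}\gamma_{i}$ (under which $1-s_{i}=u/(1+u)$ and the Jacobian cancels) one obtains
\begin{equation*}
E(1-s_{i}\mid\tau_{n},\hat\beta_{i})=\frac{\displaystyle\int_{0}^{\infty}\frac{u}{1+u}\,(1+u)^{-1/2}\exp\!\left(-\frac{W_{i}}{2(1+u)}\right)\pi\!\left(\frac{u}{n\tau_{n}}\right)du}{\displaystyle\int_{0}^{\infty}(1+u)^{-1/2}\exp\!\left(-\frac{W_{i}}{2(1+u)}\right)\pi\!\left(\frac{u}{n\tau_{n}}\right)du},\qquad W_{i}:=\frac{n\hat\beta_{i}^{2}}{\sigma^{2}},
\end{equation*}
and properness of $\pi$ guarantees both integrals are finite. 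The whole argument reduces to controlling this ratio, noting that $W_{i}\sim\chi_{1}^{2}=O_{p}(1)$ when $i\notin\mathcal{A}$ while $W_{i}/n\overset{p}{\rightarrow}(\beta_{i}^{0})^{2}/\sigma^{2}>0$ when $i\in\mathcal{A}$.

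For $i\notin\mathcal{A}$ I would work on the event $\{W_{i}\le M\}$, which has probability at least $1-\varepsilon$ for $M$ large; there $\exp(-W_{i}/(2(1+u)))$ is bounded away from $0$ and $1$ uniformly in $u$, so the ratio is governed by $\pi(u/(n\tau_{n}))$. Since $n\tau_{n}\to0$, the substitution sends us into the upper tail of $\pi$: the effective weight is $u^{-a-1}L(u/(n\tau_{n}))$ in the polynomial case (the factor $(n\tau_{n})^{a+1}$ cancels) and $\exp(-bu/(n\tau_{n}))L(u/(n\tau_{n}))$ in the exponential case, and in both cases this weight, once normalised, concentrates near $u=0$ as $n\to\infty$ --- because $u^{-a-1}$ is non-integrable at the origin while $L(u/(n\tau_{n}))$ is asymptotically the constant $L(1/(n\tau_{n}))$ away from the origin by slow variation, respectively because $\exp(-bu/(n\tau_{n}))$ collapses to a point mass at $0$. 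Since $u/(1+u)\to0$ as $u\to0$, the numerator is then negligible relative to the denominator. Making ``concentrates near $0$'' quantitative requires splitting each integral at a cutoff shrinking at a suitable rate and estimating the small-$u$ pieces by Karamata's theorem on integrals of regularly varying functions together with the Potter bounds \citep{bingham1987regular}, which control $L(u/(n\tau_{n}))$ over the small-$u$ and large-$u$ ranges simultaneously.

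For $i\in\mathcal{A}$ the factor $\exp(-W_{i}/(2(1+u)))$ is exponentially small unless $1+u$ is of the order of $W_{i}\asymp n$, so it competes against the decay of the prior, and I would locate the bulk of the posterior by a Laplace-type analysis of $\log[(1+u)^{-1/2}\exp(-W_{i}/(2(1+u)))\pi(u/(n\tau_{n}))]$. In the polynomial case the maximiser sits at $u^{\ast}\asymp W_{i}\asymp n$, so $1/(1+u)=O(1/n)$ over the bulk and $E(s_{i}\mid\tau_{n},\hat\beta_{i})\to0$; the remaining point is that the small-$u$ region, where $\exp(-W_{i}/(2(1+u)))\approx e^{-W_{i}/2}=e^{-cn}$, contributes negligibly against the denominator near $u^{\ast}$ --- which is at least of order $\tau_{n}^{a+1}L(1/\tau_{n})$ times polynomial factors --- precisely because $\log\tau_{n}/n\to0$. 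In the exponential case the additional penalty $-bu/(n\tau_{n})$ pulls the maximiser back to $u^{\ast}\asymp\sqrt{W_{i}\,n\tau_{n}}\asymp n\sqrt{\tau_{n}}$, so $1/(1+u^{\ast})\to0$ exactly when $n\sqrt{\tau_{n}}=\sqrt{n^{2}\tau_{n}}\to\infty$, and the same hypothesis likewise makes the small-$u$ contribution negligible; thus the conditions imposed on $\tau_{n}$ in the two parts are exactly what the argument needs, with $n\tau_{n}\to0$ used throughout to keep us in the tail of $\pi$.

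The main obstacle, and where essentially all of the work lies, is turning these heuristic Laplace and concentration statements into rigorous two-sided bounds on the ratio of integrals that are valid uniformly over the high-probability range of $W_{i}$, while simultaneously controlling the slowly varying factor $L$ over the entire half-line --- in particular reconciling its unspecified behaviour near $0$ (needed for properness and for the small-$u$ estimates) with its slow variation at infinity (needed for the estimates in the bulk). Once such bounds are in place, the roles of $n\tau_{n}\to0$, $\log\tau_{n}/n\to0$, and $n^{2}\tau_{n}\to\infty$ emerge as precisely the thresholds at which the dominant contribution dominates.
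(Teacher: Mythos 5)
Your reduction to the one-dimensional ratio of integrals and your identification of where each hypothesis enters are both correct and match the paper's: the noise case uses only properness and $n\tau_n\rightarrow 0$, and your Laplace heuristics correctly locate the posterior bulk at $\gamma^{\ast}\asymp 1/\tau_n$ in the polynomial case and $u^{\ast}\asymp n\sqrt{\tau_n}$ in the exponential case, which is exactly why $\log\tau_n/n\rightarrow 0$ and $n^2\tau_n\rightarrow\infty$ appear. Two remarks on the noise case. First, the concentration of the reparametrized prior at $u=0$ is nothing more than the statement that the pushforward of the fixed probability measure $\pi(\gamma_i)\,d\gamma_i$ under $\gamma_i\mapsto n\tau_n\gamma_i$ tends weakly to $\delta_0$; after bounding the exponential factor by $\exp\{n\hat\beta_i^2/(2\sigma^2)\}=O_p(1)$ (or working on your event $\{W_i\le M\}$), dominated convergence finishes the argument with no Karamata or Potter input whatsoever, which is what the paper does. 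Second, your appeal to $L(u/(n\tau_n))\approx L(1/(n\tau_n))$ ``away from the origin'' is both unnecessary and not justified uniformly in $u$, so I would drop it.

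The genuine gap is in the signal case, and it is the one you flag yourself: you never produce the rigorous lower bound on the denominator, and the obstacle you identify --- reconciling the unspecified behaviour of $L$ near $0$ (needed for properness) with its slow variation at infinity --- is a real problem for a two-sided Laplace analysis of the full integral. The paper sidesteps it by not doing a Laplace analysis. It writes $E(s_i\mid\tau_n,\mathbf{Y})\le\eta+P(s_i>\eta\mid\tau_n,\mathbf{Y})$ for arbitrary $\eta\in(0,1)$, so it suffices to show $P(s_i>\eta\mid\tau_n,\mathbf{Y})\overset{p}{\rightarrow}0$; it then bounds this posterior tail probability by a ratio in which the numerator is trivially at most $1$ after extracting the factor $\exp\{-\frac{n\hat\beta_i^2}{2\sigma^2}\eta(1-q)\}$, and the denominator is an integral over $\gamma_i>\frac{1}{n\tau_n}(\frac{1}{\eta q}-1)\rightarrow\infty$ only. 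Because the denominator is restricted to the far tail, only the behaviour of $L$ at infinity ever enters: Karamata's theorem for regularly varying integrands gives a lower bound of order $(n\tau_n)^{a}L(1/(n\tau_n))$ in the polynomial case, and a direct estimate gives one whose leading factor is $\exp\{-\frac{2b}{n\tau_n}(\frac{1}{\eta q}-1)\}$ in the exponential case; comparing these with $\exp\{-cn\}$ is precisely where $\log\tau_n/n\rightarrow 0$ and $n^2\tau_n\rightarrow\infty$ are consumed, as you predicted. Replacing your two-sided Laplace step with this one-sided tail-probability bound turns your outline into a complete proof; as written, the central estimate is still missing.
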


Proposition \ref{prop:s_consistency} shows that, regardless of the choice of the prior of $\gamma_i$ in the given class, the HT procedure can identify an irrelevant variable correctly if $\tau_n$ goes to zero at a rate faster than $n^{-1}$. On the other hand, $\tau_n$ should not converge to zero too fast in order to avoid overshrinkage and to correctly identify relevant variables. The conditions $\log(\tau_n)/n \rightarrow 0$ and $n^2\tau_n \rightarrow \infty$ serve this idea for polynomial-tailed priors and exponential-tailed priors respectively. Given $n\tau_n \rightarrow 0$, the condition $n^2 \tau_n \rightarrow \infty$ is more stringent than $\log(\tau_n)/n \rightarrow 0$. Intuitively, this has to the case since exponential tails are lighter. To guarantee that the coefficients of important variables are not overly shrunk, the global parameter should decay at a slower rate and compensate the amount of shrinkage brought by exponential local parameters. 

\subsection{Polynomial-tailed priors}
Before presenting the main results of the HT procedure with polynomail-tailed priors, we would like to introduce an assumption that will be frequently mentioned in the rest of the section. We say a sequence of positive real numbers $\{t_n\}_{n = 1}^\infty$ satisfies poly-$a$ condition if there exists $\epsilon \in (0, a)$ such that
\begin{equation*}
p_n(nt_n)^\epsilon \rightarrow 0~\mbox{and}~\log ( t_n )/ \sqrt{n} \rightarrow 0,~\mbox{as}~n \rightarrow \infty.
\end{equation*}
Let $\tau_n = n^{-1-\frac{2}{a}}$. It satisfies the poly-$a$ condition since $p_n \leq n$. 
If $p$ does not vary with $n$, the condition can be simplified to $nt_n \rightarrow 0$ and $\log (t_n) / \sqrt{n} \rightarrow 0$.

\begin{theorem}\label{thm:tuning_poly}
Suppose a proper polynomial-tailed prior of the form \eqref{eq:polynomial} is assumed for $\gamma_i, i=1, \ldots, p$ with $0<a<1$. If $\{\tau_n\}_{n=1}^\infty$ satisfies the poly-$a$ condition, then the HT procedure is oracle.
\end{theorem}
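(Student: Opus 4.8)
The plan is to decompose the oracle property into (a) variable selection consistency and (b) the optimal estimation rate, and to reduce both—via the orthogonal-design identity $\hat\beta_j^{HT}=(1-E(s_j\,|\,\bY))\hat\beta_j\,I(E(s_j\,|\,\bY)<1/2)$—to two quantitative statements about the posterior shrinkage factors that refine Proposition~\ref{prop:s_consistency}: (a) $\sum_{j\notin\mathcal A}P(E(s_j\,|\,\bY)<1/2)\to0$, and (b) $\sqrt n\,E(s_j\,|\,\bY)\overset{p}{\to}0$ for each fixed $j\in\mathcal A$. Granting these, (b) gives $P(E(s_j\,|\,\bY)\ge1/2)\to0$ for the $p_0$ signals, so a union bound with (a) yields $P(\mathcal A_n=\mathcal A)\to1$; and on $\{\mathcal A_n=\mathcal A\}$,
\[
\sqrt n(\hat{\bbeta}^{HT}_{\mathcal A}-\bbeta^0_{\mathcal A})=\sqrt n(\hat{\bbeta}_{\mathcal A}-\bbeta^0_{\mathcal A})-\mathrm{diag}\{\sqrt n\,E(s_j\,|\,\bY)\}_{j\in\mathcal A}\,\hat{\bbeta}_{\mathcal A},
\]
whose first term is exactly $N(0,\sigma^2\mathbf I_{p_0})$ (since $\hat{\bbeta}\sim N(\bbeta^0,n^{-1}\sigma^2\mathbf I_p)$ for orthogonal designs) and whose second term is $o_p(1)$ by (b) and $\hat\beta_j=O_p(1)$, so Slutsky's theorem finishes. (Throughout I take $\sigma^2$ known; an estimated $\sigma^2$ costs only constants after restricting to the probability-tending-to-one event on which $\hat\sigma^2$ is within a fixed factor of $\sigma^2$.)

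For (a) I would fix $\epsilon\in(0,a)$ as in the poly-$a$ condition (note $\epsilon<a<1$), use $x/(1+x)\le\min(x,1)\le x^\epsilon$ to get $E(1-s_j\,|\,\bY)\le(n\tau_n)^\epsilon E(\gamma_j^\epsilon\,|\,\bY)$, and then bound $E(\gamma_j^\epsilon\,|\,\bY)$ by writing out the posterior of $\gamma_j$, replacing the marginal-likelihood factor $\exp\{-n\hat\beta_j^2/(2\sigma^2(1+n\tau_n\gamma))\}$ by $1$ in the numerator and by $\exp\{-n\hat\beta_j^2/(2\sigma^2)\}$ in the denominator; monotone convergence together with finiteness of $\int\gamma^\epsilon\pi(\gamma)\,d\gamma$ (which holds precisely because $\epsilon<a$, via Karamata's theorem at $\infty$ and properness near $0$) then gives $E(\gamma_j^\epsilon\,|\,\bY)\le C\exp\{n\hat\beta_j^2/(2\sigma^2)\}$ for large $n$. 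Hence $E(1-s_j\,|\,\bY)\le C(n\tau_n)^\epsilon\exp\{n\hat\beta_j^2/(2\sigma^2)\}$, and since $n\hat\beta_j^2/\sigma^2\sim\chi^2_1$ for null $j$, the Gaussian tail bound turns $P(E(1-s_j\,|\,\bY)>1/2)$ into $P(n\hat\beta_j^2/\sigma^2>2\epsilon\log(1/(n\tau_n))-O(1))\lesssim(n\tau_n)^\epsilon$; summing over the $\le p_n$ null indices and invoking $p_n(n\tau_n)^\epsilon\to0$ gives (a).

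For (b) I would fix $j\in\mathcal A$, work on $\{|\hat\beta_j|>|\beta_j^0|/2\}$ (so $q_j:=n\hat\beta_j^2/\sigma^2\asymp n$), choose a threshold $\eta_n$ of order $1/\log(n/\tau_n)$, and split
\[
E(s_j\,|\,\bY)\le\frac{1}{1+\eta_n q_j}+P(n\tau_n\gamma_j\le\eta_n q_j\,|\,\bY).
\]
The first term is of order $\log(n/\tau_n)/n=o(n^{-1/2})$, this last step using $\log\tau_n/\sqrt n\to0$. For the posterior probability, I would bound its numerator by $\exp\{-q_j/(2(1+\eta_nq_j))\}\le(\tau_n/n)^{c}$ (for a fixed $c>1$ determined by the constant in $\eta_n$) times the prior mass $\le1$, and its denominator from below by restricting the $\gamma$-integral to $[\gamma^*,2\gamma^*]$ with $n\tau_n\gamma^*\asymp q_j$, i.e. $\gamma^*\asymp1/\tau_n$, where the polynomial tail $\gamma^{-a-1}L(\gamma)$ still leaves mass of order $(\gamma^*)^{-a}L(\gamma^*)$; the resulting ratio is, up to subpolynomial factors, of order $\tau_n^{c-a}n^{1/2-c}$, which is $o(n^{-1/2})$ because $a<1<c$ and $\tau_n=o(1/n)$. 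This yields $\sqrt n\,E(s_j\,|\,\bY)\overset{p}{\to}0$.

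The hard part is exactly the upgrade from Proposition~\ref{prop:s_consistency}'s qualitative limits to these quantitative ones. In (a) I need $P(E(1-s_j\,|\,\bY)>1/2)$ to decay like a fixed positive power of $n\tau_n$, sharply enough to beat $p_n$ in the union bound; this is precisely what $p_n(n\tau_n)^\epsilon\to0$ buys over the plain $n\tau_n\to0$ of Proposition~\ref{prop:s_consistency}, and it is what pins $\epsilon$ to $(0,a)$. In (b) I need the strictly faster rate $o(n^{-1/2})$ rather than merely $o(1)$, which forces a sharp lower bound on the posterior mass of $\gamma_j$ at large $\gamma$; this is where the heaviness of the polynomial tail is essential---an exponential tail $e^{-b\gamma}$ would deflate the denominator by a factor $e^{-b\gamma^*}$ with $\gamma^*\asymp1/\tau_n$, which cannot be compensated while keeping $n\tau_n\to0$, and this is the mechanism behind the non-optimality of exponential-tailed priors established later.
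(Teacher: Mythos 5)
Your proposal is correct, and its overall architecture coincides with the paper's: oracle is split into (i) selection consistency via a union bound over the nulls plus consistency for the fixed set of signals, and (ii) asymptotic normality via the identity $\sqrt n(\hat\beta_i^{HT}-\beta_i^0)=\sqrt n(\hat\beta_i-\beta_i^0)-\sqrt n E(s_i\,|\,\tau_n,\bY)\hat\beta_i-(\text{indicator term})$, reduced to showing $\sqrt n\,E(s_i\,|\,\tau_n,\bY)\overset{p}{\to}0$. Where you genuinely diverge is in how the two quantitative shrinkage bounds are obtained. For the nulls, the paper's Lemma \ref{lemma:noise_2} splits the integral $\int(n\tau_n\gamma)(1+n\tau_n\gamma)^{-3/2}\pi(\gamma)\,d\gamma$ into three ranges and applies Karamata's theorem to get the sharp rate $(n\tau_n)^aL(1/(n\tau_n))$, then peels off the factor $(n\tau_n)^\epsilon$ at the very end of the Gaussian tail computation; you instead use the elementary inequality $x/(1+x)\le x^\epsilon$ together with $\int\gamma^\epsilon\pi(\gamma)\,d\gamma<\infty$ (valid precisely because $\epsilon<a$) to land directly on the $(n\tau_n)^\epsilon$ rate that the poly-$a$ condition is phrased in. Your route is shorter and makes transparent why $\epsilon$ must lie in $(0,a)$; the paper's sharper bound is reusable elsewhere (e.g., in Theorem \ref{thm:tuning_exp_vs}'s exponential analogue). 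For the signals, the paper's Lemma \ref{lemma:signal_2} takes $\eta=\epsilon/(2\sqrt n)$ and lower-bounds the posterior normalizer by the Karamata tail integral $\int_{(1/(n\tau_n))(1/(\eta q)-1)}^\infty\gamma^{-a-3/2}L(\gamma)\,d\gamma$, whereas you take $\eta_n\asymp1/\log(n/\tau_n)$ and localize the prior mass on $[\gamma^*,2\gamma^*]$ with $\gamma^*\asymp1/\tau_n$; these are the same Karamata estimate in different clothing, and both ultimately consume the hypothesis $\log\tau_n/\sqrt n\to0$ in the same place. One small point to make explicit if you write this up: your constants in step (b) (the choice of $c$ in $\eta_n$ and the event $\{|\hat\beta_j|>|\beta_j^0|/2\}$) depend on $\beta_j^0$, which is harmless here because $|\mathcal A|=p_0$ is fixed, and your bound in step (a) is uniform in $j$ because the $\gamma_i$ are i.i.d.; the paper is careful to flag this uniformity and you should be too.
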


As Theorem \ref{thm:tuning_poly} demenstrates, if $\tau_n$ is chosen to decay to zero at an appropriate rate, the HT procedure has the oracle property. This suggests if a hyperprior $\pi_n^{\tau}$ of $\tau$ concentrates most of its probability mass in an interval with its end points satisfying the poly-$a$ condition, then the HT threshold should still enjoy the oracle property. With this observation, we have the following result.
\begin{corollary}\label{thm:prior_poly}
Suppose that a proper polynomial-tailed prior of the form \eqref{eq:polynomial} is assumed for $\gamma_i, i=1, \ldots, p$ with $0<a<1$. We also place a prior $\pi_n^{\tau}$ with support $(\xi_n, \psi_n)$ on $\tau$. If both $\{\xi_n\}_{n=1}^\infty$ and $\{\psi_n\}_{n=1}^\infty$ satisfy the poly-$a$ condition, then the HT procedure is oracle.
\end{corollary}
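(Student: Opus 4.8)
The plan is to deduce Corollary~\ref{thm:prior_poly} from Theorem~\ref{thm:tuning_poly} by conditioning on the global parameter and integrating against its posterior. Since $\pi_n^\tau$ has support $(\xi_n,\psi_n)$, so does the posterior $\pi_n^\tau(\cdot\mid\bY)$ for every realization of $\bY$, and the law of iterated expectation gives
\[
E(1-s_i\mid\bY)=\int_{\xi_n}^{\psi_n}E(1-s_i\mid\tau,\bY)\,\pi_n^\tau(d\tau\mid\bY),\qquad \hat\beta_i^{PM}=\bigl(1-E(s_i\mid\bY)\bigr)\hat\beta_i.
\]
Hence $\abs{E(1-s_i\mid\bY)-\ell}\le\sup_{\xi_n<\tau<\psi_n}\abs{E(1-s_i\mid\tau,\bY)-\ell}$ for any constant $\ell$, and it suffices to establish uniform-in-$\tau$ versions, over $\tau\in(\xi_n,\psi_n)$, of the pointwise statements behind Proposition~\ref{prop:s_consistency} and Theorem~\ref{thm:tuning_poly}. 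The enabling observation is that the poly-$a$ condition is stable under sandwiching: if $\xi_n<t_n<\psi_n$ and both $\{\xi_n\}$ and $\{\psi_n\}$ satisfy poly-$a$, then choosing $\epsilon$ to be an exponent valid for $\{\psi_n\}$ we get $p_n(nt_n)^\epsilon\le p_n(n\psi_n)^\epsilon\to0$, and $\abs{\log t_n}/\sqrt n\le\abs{\log\xi_n}/\sqrt n\to0$ because $\xi_n<t_n<1$ eventually; so $\{t_n\}$ satisfies poly-$a$ too. Thus every deterministic sequence in the support interval is admissible for Theorem~\ref{thm:tuning_poly}, and only the upgrade from ``for each such sequence'' to ``uniformly over $(\xi_n,\psi_n)$'' remains.

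For variable selection consistency I would prove $\sup_{\xi_n<\tau<\psi_n}E(1-s_i\mid\tau,\bY)\overset{p}{\rightarrow}0$ for $i\notin\mathcal{A}$ and $\inf_{\xi_n<\tau<\psi_n}E(1-s_i\mid\tau,\bY)\overset{p}{\rightarrow}1$ for $i\in\mathcal{A}$; with \eqref{eq:Criterion_simple} these give $P(\mathcal{A}_n=\mathcal{A})\to1$. The posterior density of $\gamma_i$ given $(\tau,\bY)$ is proportional to $\pi(\gamma_i)(n\tau\gamma_i+1)^{-1/2}\exp\{-n\hat\beta_i^2/(2\sigma^2(n\tau\gamma_i+1))\}$; after the substitution $v=n\tau\gamma_i$, the tail probabilities $P(\gamma_i>\delta/(n\tau)\mid\tau,\bY)$ and $P(\gamma_i<M/(n\tau)\mid\tau,\bY)$ that enter the bounds for $E(1-s_i\mid\tau,\bY)$ become expressions depending on $\tau$ only through $n\tau$, through $(n\tau)^aL(1/(n\tau))$, and through $\log\tau$. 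The quantity $(n\tau)^aL(1/(n\tau))$ tends to $0$ as $n\tau\to0$ and is eventually monotone in $n\tau$, so its supremum over $\tau\in(\xi_n,\psi_n)$ equals its value at $\psi_n$ and vanishes since $n\psi_n\to0$; and $\abs{\log\tau}/\sqrt n$ is at most $\max\{\abs{\log\xi_n},\abs{\log\psi_n}\}/\sqrt n\to0$ over that interval. Feeding these uniform bounds into the split estimates of the proof of Theorem~\ref{thm:tuning_poly} --- namely $E(1-s_i\mid\tau,\bY)\le\delta+P(\gamma_i>\delta/(n\tau)\mid\tau,\bY)$ when $i\notin\mathcal{A}$, and the analogous upper bound on $E(s_i\mid\tau,\bY)$ when $i\in\mathcal{A}$ --- but now with $\tau$ kept free throughout, yields the claimed uniform limits.

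For the optimal estimation rate, on the event $\{i\in\mathcal{A}_n\}$ we have $\hat\beta_i^{HT}-\beta_i^0=(\hat\beta_i-\beta_i^0)-E(s_i\mid\bY)\hat\beta_i$, and since $\sqrt n(\hat\beta_i-\beta_i^0)\overset{d}{\rightarrow}N(0,\sigma^2)$ it is enough that $\sqrt n\,E(s_i\mid\bY)\abs{\hat\beta_i}\overset{p}{\rightarrow}0$ for $i\in\mathcal{A}$. Here $E(s_i\mid\bY)\le\sup_{\xi_n<\tau<\psi_n}E(s_i\mid\tau,\bY)$, and bounding this supremum by the same split estimate used for $i\in\mathcal{A}$ above --- now with $n\tau$ ranging over $(n\xi_n,n\psi_n)$ and controlled through the left endpoint $\xi_n$, which also satisfies poly-$a$ --- gives $\sqrt n\sup_{\xi_n<\tau<\psi_n}E(s_i\mid\tau,\bY)\abs{\hat\beta_i}\overset{p}{\rightarrow}0$. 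Since the OLS estimators $\{\hat\beta_j:j\in\mathcal{A}\}$ are jointly Gaussian with $\sqrt n(\hat\beta_j-\beta_j^0)$ independent $N(0,\sigma^2)$ under the orthogonal design, combining the last display with $P(\mathcal{A}_n=\mathcal{A})\to1$ and Slutsky's theorem yields $\sqrt n(\hat\bbeta_{\mathcal{A}}^{HT}-\bbeta_{\mathcal{A}}^0)\overset{d}{\rightarrow}N(\mathbf{0}_{p_0},\sigma^2\mathbf{I}_{p_0})$, so the HT procedure is oracle.

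I expect the main obstacle to be precisely the passage from pointwise to uniform in $\tau$: because the posterior of $\gamma_i$ given $(\tau,\bY)$ itself moves with $\tau$, one cannot simply fix $\gamma_i$ and invoke monotonicity of $s_i=1/(1+n\tau\gamma_i)$ in $\tau$. The remedy is to carry the composite quantity $n\tau$ as a genuine parameter through the integral estimates of Proposition~\ref{prop:s_consistency} and Theorem~\ref{thm:tuning_poly}, observing that those estimates depend on $\tau$ only through $n\tau\to0$, through $p_n(n\tau)^\epsilon\to0$, and through $\log(\tau)/\sqrt n\to0$, all three of which are controlled uniformly over $(\xi_n,\psi_n)$ once they hold at the two endpoints. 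The remaining ingredients --- the dominated-convergence control of the normalizing constant, the treatment of the slowly varying function $L$, and the asymptotics of $\hat\beta_i$ --- are unchanged from the single-sequence case.
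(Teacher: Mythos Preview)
Your approach is correct in outline but differs from the paper's. The paper dispatches the corollary in three lines by invoking monotonicity: since $s_i=(1+n\tau\gamma_i)^{-1}$ is decreasing in $\tau$, it bounds $E(1-s_i\mid\bY)\le E(1-s_i\mid\tau=\psi_n,\bY)$, and similarly $P(s_i>\eta\mid\bY)\le P(s_i>\eta\mid\tau=\xi_n,\bY)$ and $P(s_i<\eta\mid\bY)\le P(s_i<\eta\mid\tau=\psi_n,\bY)$, then simply refers back to the proof of Theorem~\ref{thm:tuning_poly} at those two fixed endpoint sequences. You instead argue uniformity in $\tau$ over $(\xi_n,\psi_n)$ by tracking the $\tau$-dependence of the integral bounds in Lemmas~\ref{lemma:noise_2} and~\ref{lemma:signal_2} and observing that each controlling quantity---$n\tau$, $(n\tau)^aL(1/(n\tau))$, $\log\tau/\sqrt n$, and the $o(1)$ terms---is governed by one of the two endpoints.

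What each approach buys: the paper's route is much shorter, but its stated justification (``$s_i$ decreasing in $\tau$'') does not by itself imply the claimed monotonicity of $E(1-s_i\mid\tau,\bY)$, precisely because---as you note---the posterior of $\gamma_i$ moves with $\tau$; after the substitution $u=n\tau\gamma_i$ the residual $\tau$-dependence sits in $L(u/(n\tau))$, which need not be monotone. Your uniformity argument sidesteps this issue and is therefore more robust across the full class \eqref{eq:polynomial}. One small correction to your write-up: the assertion that $(n\tau)^aL(1/(n\tau))$ is ``eventually monotone in $n\tau$'' is not true for arbitrary slowly varying $L$; the conclusion you need---that its supremum over $\tau\in(\xi_n,\psi_n)$ vanishes---still follows, but from $\sup_{x\ge 1/(n\psi_n)}x^{-a}L(x)\to 0$ (Lemma~\ref{lemma:svf}\eqref{eq:svf_3}), not from monotonicity.
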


\subsection{Exponential-tailed priors}
Now we examine the properties of HT procedure when exponential-tailed priors are assumed for the local parameters $\gamma_i$.
\begin{theorem}\label{thm:tuning_exp_vs}
Suppose a proper exponential-tailed prior of the form \eqref{eq:exponential} is assumed for $\gamma_i, i=1, \ldots, p$ and $\int_0^\infty \gamma_i \pi(\gamma_i) d\gamma_i < \infty$. If $n\tau_n \rightarrow 0$, $n^2\tau_n \rightarrow \infty$ and $\frac{p_n n \tau_n}{\sqrt{\log(n\tau_n)}} \rightarrow 0$, as $n\rightarrow \infty$, then the HT procedure achieves variable selection consistency.
\end{theorem}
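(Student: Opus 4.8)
The plan is to bound $P(\mathcal{A}_n \neq \mathcal{A})$ by splitting it over the two ways selection can fail. By \eqref{eq:Criterion_simple}, coordinate $i$ is retained by HT exactly when $E(1-s_i \,|\, \tau_n, \bY) > 1/2$, so
$$ P(\mathcal{A}_n \neq \mathcal{A}) \;\le\; \sum_{i \in \mathcal{A}} P\!\left(E(1-s_i \,|\, \tau_n, \bY) \le \tfrac12\right) \;+\; \sum_{i \notin \mathcal{A}} P\!\left(E(1-s_i \,|\, \tau_n, \bY) > \tfrac12\right). $$
Since $p_0 = |\mathcal{A}|$ does not grow with $n$, the first sum has a fixed number of terms, each of which tends to $0$ by Proposition~\ref{prop:s_consistency}(2), whose hypotheses $n\tau_n \to 0$ and $n^2\tau_n \to \infty$ are in force; so the first sum vanishes. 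The substance is to show the second sum, which has $p_n - p_0$ terms, also vanishes, and this demands a bound uniform over the irrelevant coordinates.

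For $i \notin \mathcal{A}$ I would first use orthogonality to pass to a one-dimensional problem: the posterior of $\gamma_i$ depends on $\bY$ only through $\hat\beta_i$, and after integrating out $\beta_i$ it has density proportional to $(1+n\tau_n\gamma_i)^{-1/2}\exp\{-z_i^2/(2(1+n\tau_n\gamma_i))\}\,\pi(\gamma_i)$, where $z_i := \sqrt n\,\hat\beta_i/\sigma$; because $\beta_i^0 = 0$, the $z_i$, $i \notin \mathcal{A}$, are i.i.d.\ $N(0,1)$. The key elementary step is
$$ E(1-s_i \,|\, \bY) \;=\; E\!\left(\frac{n\tau_n\gamma_i}{1+n\tau_n\gamma_i}\;\Big|\;\bY\right) \;\le\; n\tau_n\, E(\gamma_i \,|\, \bY), $$
which trades the shrinkage-factor posterior for the posterior mean of $\gamma_i$. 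The latter I would bound deterministically in terms of $z_i$: in the numerator $\int_0^\infty \gamma_i (1+n\tau_n\gamma_i)^{-1/2}e^{-z_i^2/(2(1+n\tau_n\gamma_i))}\pi(\gamma_i)\,d\gamma_i$, drop the two factors that are $\le 1$ to get $\int_0^\infty \gamma_i\pi(\gamma_i)\,d\gamma_i$, finite by the standing assumption $\int_0^\infty \gamma_i\pi(\gamma_i)\,d\gamma_i < \infty$; in the denominator, restrict the integral to $\gamma_i \in (0,1)$, where (once $n$ is large enough that $n\tau_n < 1$) the first factor is $\ge 2^{-1/2}$ and the exponential is $\ge e^{-z_i^2/2}$, leaving a lower bound $c\,e^{-z_i^2/2}$ with $c = 2^{-1/2}\int_0^1\pi(\gamma_i)\,d\gamma_i > 0$. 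Hence $E(\gamma_i \,|\, \bY) \le C\,e^{z_i^2/2}$ with $C$ free of $i$ and $n$, so $\{E(1-s_i\,|\,\bY) > 1/2\} \subseteq \{z_i^2 > 2\log(1/(2Cn\tau_n))\}$.

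The argument then closes with a maximal inequality. With $t_n := 2\log(1/(2Cn\tau_n))$, which $\to \infty$ because $n\tau_n \to 0$, the sharp chi-square tail bound $P(\chi^2_1 > t) \le \sqrt{2/(\pi t)}\,e^{-t/2}$ and a union bound over the at most $p_n$ indices give
$$ \sum_{i \notin \mathcal{A}} P\!\left(E(1-s_i\,|\,\bY) > \tfrac12\right) \;\le\; p_n\,\sqrt{\frac{2}{\pi t_n}}\,e^{-t_n/2} \;\le\; \frac{C'\, p_n\, n\tau_n}{\sqrt{t_n}}, $$
and since $t_n \sim 2\log(1/(n\tau_n))$ the right-hand side is of order $p_n n\tau_n / \sqrt{\log(1/(n\tau_n))}$, which $\to 0$ by the third hypothesis. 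Together with the vanishing of the first sum this yields $P(\mathcal{A}_n = \mathcal{A}) \to 1$.

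The step I expect to be the main obstacle is the lower bound on the posterior normalizing constant of $\gamma_i$: one must verify that, because the global parameter forces $n\tau_n \to 0$, the prior mass near small $\gamma_i$ survives the Gaussian likelihood factor, so the normalizing constant stays at least a fixed multiple of $e^{-z_i^2/2}$ uniformly over $i \notin \mathcal{A}$. A secondary point needing care is that the chi-square tail must be used in the sharp form carrying the $t_n^{-1/2}$ factor: the cruder bound $e^{-t_n/2}$ would deliver the conclusion only under $p_n n\tau_n \to 0$, which is slightly stronger than the hypothesis actually assumed, so the stated condition is exactly calibrated to this argument. Finally, the assumption $\int_0^\infty\gamma_i\pi(\gamma_i)\,d\gamma_i < \infty$ is precisely what makes the numerator bound valid, which explains its presence in the statement.
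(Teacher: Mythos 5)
Your proof is correct and takes essentially the same route as the paper's: the same decomposition of $P(\mathcal{A}_n\neq\mathcal{A})$, Proposition~\ref{prop:s_consistency} for the finitely many signal coordinates, and for the noise coordinates the bound $E(1-s_i\,|\,\tau_n,\bY)\le C\,n\tau_n\exp\{n\hat{\beta}_i^2/(2\sigma^2)\}$ followed by the sharp Gaussian tail estimate and a union bound, which is exactly how the stated condition $p_n n\tau_n/\sqrt{\log(n\tau_n)}\to 0$ arises. The only difference is that you re-derive the content of Lemma~\ref{lemma:noise_2}(2) inline, lower-bounding the posterior normalizing constant by restricting to $\gamma_i\in(0,1)$ instead of invoking dominated convergence; this is a harmless, slightly cruder variant of the same estimate.
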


\noindent
{\bf Remark.} Let $\tau_n = \log\log n/n^2$. It satisfies the conditions in Theorem \ref{thm:tuning_exp_vs}. Similar to what we have mentioned in Section 3.1, these conditions are more stringent than the poly-$a$ condition since exponential tails are lighter than polynomial ones. If $p$ does not depend on $n$, the conditions simplify to $n\tau_n \rightarrow 0$ and $n^2\tau_n \rightarrow \infty$, as $n \rightarrow \infty$. If $\pi(\gamma_i) = \exp(-\gamma_i)$, then the marginal prior of $\beta_i$ is proportional to $\exp(-|\beta_i|/\sqrt{\tau_n})$. Thus $1/\sqrt{\tau_n}$ corresponds to the penalty parameter $\lambda_n$ in the lasso estimator. 
The two conditions on $\tau_n$ assuming fixed $p$ can be translated to $\lambda_n/\sqrt{n} \rightarrow \infty$ and $\lambda_n/n \rightarrow 0$, which is a sufficient condition for the lasso estimator to be model selection consistent assuming the irrepresentable condition \citep{zou2006adaptive}.
\noindent

\begin{theorem}\label{thm:tuning_exp_an}
Suppose a proper exponential-tailed prior of the form \eqref{eq:exponential} and there exist $0 < m \leq M < \infty$ such that $m < L(t) < M$ for all $t \in (0, \infty)$. If $n\tau_n \rightarrow 0$ and $n^2\tau_n \rightarrow \infty$, then for $i \in \mathcal{A}$, with probability 1, 
$$\frac{m}{M}S^{(i)}_n \leq n\sqrt{\tau_{n}}\left(\hat{\beta}_{i}^{HT}-\beta_{i}^{0}\right) \leq \frac{M}{m}S^{(i)}_n,$$
where $\{S^{(i)}_n, n\geq 1\}$ are sequences of random variables and $S^{(i)}_n \overset{p}{\rightarrow} -\sqrt{2b}\sigma \sign(\beta_i^0)$.
\end{theorem}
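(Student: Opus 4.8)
The plan is to reduce the theorem to a careful asymptotic evaluation of $E(s_i \mid \tau_n, \bY)$ for $i \in \mathcal{A}$, exploiting the orthogonal-design identity $\hat\beta_i^{PM} = (1 - E(s_i\mid\bY))\hat\beta_i$. First I would dispose of the indicator in $\hat\beta_i^{HT}$: by part (2) of Proposition \ref{prop:s_consistency}, the assumptions $n\tau_n \to 0$ and $n^2\tau_n \to \infty$ give $E(1-s_i\mid\tau_n,\bY) \overset{p}{\to} 1$, so the selection event $\{\,|\hat\beta_i^{PM}/\hat\beta_i| > 1/2\,\}$ has probability tending to one (and, with the usual coherent-sequence convention, occurs for all large $n$ almost surely, since $\hat\beta_i \to \beta_i^0$ a.s.). On that event $\hat\beta_i^{HT} = \hat\beta_i^{PM}$, so
\[
n\sqrt{\tau_n}\bigl(\hat\beta_i^{HT} - \beta_i^0\bigr) = n\sqrt{\tau_n}\bigl(\hat\beta_i - \beta_i^0\bigr) - n\sqrt{\tau_n}\,\hat\beta_i\,E(s_i\mid\bY).
\]
Since $\hat\beta_i - \beta_i^0 \sim N(0, \sigma^2/n)$ under the orthogonal design, the first term has variance $n\tau_n\sigma^2 \to 0$ and is asymptotically negligible; the content of the theorem is therefore the behaviour of $n\sqrt{\tau_n}\,\hat\beta_i\,E(s_i\mid\bY)$.

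Next I would write $E(s_i\mid\bY)$ as a ratio of integrals against the posterior of $\gamma_i$. With $\hat\beta_i\mid\gamma_i \sim N(0,\sigma^2(\gamma_i\tau_n + 1/n))$ and $\pi(\gamma_i) = e^{-b\gamma_i}L(\gamma_i)$, the substitution $u = \gamma_i\tau_n + 1/n$ (note that $1 + n\tau_n\gamma_i = nu$) collapses this to
\[
E(s_i\mid\bY) = \frac{1}{n}\cdot\frac{\int_{1/n}^\infty u^{-3/2}\exp(-c/u - bu/\tau_n)\,L\bigl((u-1/n)/\tau_n\bigr)\,du}{\int_{1/n}^\infty u^{-1/2}\exp(-c/u - bu/\tau_n)\,L\bigl((u-1/n)/\tau_n\bigr)\,du},\qquad c = \frac{\hat\beta_i^2}{2\sigma^2}.
\]
The role of the hypothesis $m \le L \le M$ is now transparent: $L$ can be pulled out of both integrals at the cost of the factors $m$ and $M$, so that with probability one $\frac{m}{M}R_n \le E(s_i\mid\bY) \le \frac{M}{m}R_n$, where $R_n$ is the same ratio with $L$ deleted; note $c \to (\beta_i^0)^2/(2\sigma^2) > 0$ a.s.

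The hard part will be the Laplace (saddle-point) analysis of $R_n$. The exponent $-c/u - bu/\tau_n$ is maximised at $u^\star = \sqrt{c\tau_n/b}$, the $u^{-k}$ prefactors shifting this location only by lower-order terms; the condition $n^2\tau_n \to \infty$ is precisely what makes $u^\star \asymp \sqrt{\tau_n}$ dominate the lower endpoint $1/n$ of the range, where the integrand is super-exponentially small, so the truncation at $1/n$ is irrelevant. A standard Laplace expansion — or, since the untruncated integrals $\int_0^\infty u^{\nu-1}e^{-c/u - (b/\tau_n)u}\,du = 2(c\tau_n/b)^{\nu/2}K_\nu(2\sqrt{bc/\tau_n})$ are modified Bessel functions with $K_{1/2} = K_{-1/2}$, an explicit evaluation followed by a tail bound on the $(0,1/n)$ part — yields that the ratio of the two integrals is asymptotic to $(u^\star)^{-1} = \sqrt{b/(c\tau_n)}$. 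Hence $n\sqrt{\tau_n}R_n \to \sqrt{b/c} = \sqrt{2b}\,\sigma/|\hat\beta_i|$, so $n\sqrt{\tau_n}\,\hat\beta_i\,R_n \to \sqrt{2b}\,\sigma\,\sign(\beta_i^0)$.

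To finish, I would set $S_n^{(i)} := -\,n\sqrt{\tau_n}\,\hat\beta_i\,R_n$ (folding in the vanishing $n\sqrt{\tau_n}(\hat\beta_i-\beta_i^0)$ contribution, which does not affect the limit), so that $S_n^{(i)} \overset{p}{\to} -\sqrt{2b}\,\sigma\,\sign(\beta_i^0)$ by the previous step. Multiplying the sandwich $\frac{m}{M}R_n \le E(s_i\mid\bY) \le \frac{M}{m}R_n$ through by $-n\sqrt{\tau_n}\hat\beta_i$ — distinguishing the two signs of $\hat\beta_i$, which is where $\sign(\beta_i^0)$ enters — and adding the negligible Gaussian term reproduces exactly the two-sided bound in the statement. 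The only points needing care are the uniformity of the Laplace estimate in the random parameter $c = \hat\beta_i^2/(2\sigma^2)$, which is fine because $\hat\beta_i \to \beta_i^0 \ne 0$ almost surely, and the bookkeeping of the $N(0,\sigma^2/n)$ term, controlled by its vanishing variance.
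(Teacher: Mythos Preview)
Your proposal is correct and follows the same architecture as the paper: decompose $n\sqrt{\tau_n}(\hat\beta_i^{HT}-\beta_i^0)$ into the Gaussian piece, the $E(s_i\mid\bY)\hat\beta_i$ piece, and the indicator piece; use the bounds $m\le L\le M$ to sandwich $E(s_i\mid\bY)$ by $\frac{m}{M}$ and $\frac{M}{m}$ times a clean ratio; and show that clean ratio, multiplied by $n\sqrt{\tau_n}$, converges to $\sqrt{2b}\,\sigma/|\beta_i^0|$. Your $R_n$ is in fact the paper's $\tilde S_n^{(i)}$ after the substitution $s_i=1/(nu)$.

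Where you genuinely differ is in evaluating the clean ratio. The paper (in the proof of Proposition~\ref{prop:not_an}, which Theorem~\ref{thm:tuning_exp_an} simply cites) recognises the integrand $s_i^{-3/2}\exp\{-\alpha s_i-\beta/s_i\}$ as an Inverse Gaussian density, invokes Shuster's expression for the IG cdf in terms of $\Phi$, and then reads off the asymptotics of $\tilde S_n^{(i)}$ from the resulting normal-tail formula. You instead extend the integrals to $(0,\infty)$, recognise them as $2(c\tau_n/b)^{\nu/2}K_\nu(2\sqrt{bc/\tau_n})$ with $\nu=\pm\tfrac12$, and exploit $K_{1/2}=K_{-1/2}$ to get the ratio $\sqrt{b/(c\tau_n)}$ immediately; the truncation at $1/n$ is handled by the observation that $u^\star\asymp\sqrt{\tau_n}\gg 1/n$ under $n^2\tau_n\to\infty$, so the omitted tail is of order $e^{-cn}=o(e^{-2\sqrt{bc/\tau_n}})$. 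Your route is shorter and more transparent for the specific regime $n^2\tau_n\to\infty$ needed here; the paper's IG/Shuster route is heavier but gives exact finite-$n$ expressions that also drive the case analysis in Proposition~\ref{prop:not_an} (where $n^2\tau_n$ may tend to a finite constant or zero). Either computation suffices for the present theorem.
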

\noindent
{\bf Remark.} Theorem \ref{thm:tuning_exp_vs} shows that, with exponential-tailed priors on local parameters, the HT procedure can achieve variable selection consistency when $\tau_n$ vanishes at certain rate. However, Theorem \ref{thm:tuning_exp_an} tells us that the procedure cannot achieve optimal estimation rate with $\tau_n$ decaying at this rate. The boundedness condition in Theorem \ref{thm:tuning_exp_an} looks restrictive, but all the three exponential-tailed distributions listed in Table \ref{table:priors} satisfy this condition. If $\beta_i$ has a double exponential prior as in \citet{park2008bayesian}, $L(t) = 1$, for all $t > 0$. In this case, we have $$n\sqrt{\tau_{n}}\left(\hat{\beta}_{i}^{HT}-\beta_{i}^{0}\right)\overset{p}{\rightarrow} -\sqrt{2b}\sigma \sign(\beta_i^0).$$

Next we will show the HT procedure with exponential-tailed priors does not have the oracle property either for other choice of $\tau_n$.
\begin{proposition}\label{prop:not_vs}
If the prior of $\gamma_i$, $\pi(\gamma_i)$, satisfies the condition
\begin{equation}\label{eq:tuning_exp_cond1}
\int_0^{\infty} \gamma_i^{-1/2}\pi(\gamma_i)d\gamma_i < \infty, \textrm{ for } i=1,\ldots,p,
\end{equation} 
then the HT procedure cannot achieve variable selection consistency when $n\tau_n \rightarrow c \in (0, \infty]$ as $n \rightarrow \infty$.
\end{proposition}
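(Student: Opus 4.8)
The plan is to contradict variable selection consistency by showing that a fixed null coordinate is retained by HT with probability bounded away from $0$. Fix $i\notin\mathcal{A}$ (if $\mathcal{A}=\{1,\dots,p\}$ there is nothing to exclude and the claim is vacuous). Under the orthogonal design the posterior of $\gamma_i$ given $\bY$ depends on the data only through $\hat{\beta}_i$, with density proportional to $\pi(\gamma_i)(1+v_n\gamma_i)^{-1/2}\exp\{-w_i/(2(1+v_n\gamma_i))\}$, where $v_n:=n\tau_n$ and $w_i:=n\hat{\beta}_i^2/\sigma^2$. Hence $E(s_i\mid\bY)=E\{(1+v_n\gamma_i)^{-1}\mid\bY\}=:S_n(w_i)$ is a deterministic function of $w_i$, namely $S_n(w)=\frac{\int_0^\infty\pi(\gamma)(1+v_n\gamma)^{-3/2}e^{-w/(2(1+v_n\gamma))}d\gamma}{\int_0^\infty\pi(\gamma)(1+v_n\gamma)^{-1/2}e^{-w/(2(1+v_n\gamma))}d\gamma}$, and by orthogonality $w_i\sim\chi^2_1$ since $\beta_i^0=0$. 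By \eqref{eq:Criterion_simple}, HT retains $\mathbf{x}_i$ precisely when $S_n(w_i)<1/2$, so $P(i\in\mathcal{A}_n)=P(S_n(w_i)<1/2)$, and variable selection consistency forces this to tend to $0$.

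Two observations reduce the problem. First, $S_n(\cdot)$ is nonincreasing: writing the conditional density of $\gamma_i$ as $\propto g(\gamma)e^{w\phi(\gamma)}$ with $\phi(\gamma)=-\{2(1+v_n\gamma)\}^{-1}$ increasing in $\gamma$, one gets $S_n'(w)=\mathrm{Cov}((1+v_n\gamma)^{-1},\phi(\gamma))$ under the tilted posterior, the covariance of a decreasing and an increasing function of $\gamma$, hence $\le 0$. Second, since $w_i\sim\chi^2_1$ has full support on $(0,\infty)$, if for some finite $w_0$ we have $S_{n_k}(w_0)<1/2$ along a subsequence, then $P(i\in\mathcal{A}_{n_k})\ge P(w_i\ge w_0)=P(\chi^2_1\ge w_0)>0$ along that subsequence, so $P(\mathcal{A}_{n_k}=\mathcal{A})\not\to1$. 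Therefore it suffices to exhibit a finite $w_0$ with $\liminf_n S_n(w_0)<1/2$.

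I would produce such a $w_0$ by treating the two regimes of $c$ separately. If $c\in(0,\infty)$, dominated convergence (the integrands are dominated by $C\gamma^{-1/2}\pi(\gamma)$, which \eqref{eq:tuning_exp_cond1} makes integrable) gives $S_n(w_0)\to S_\infty(w_0):=\frac{\int_0^\infty\pi(\gamma)(1+c\gamma)^{-3/2}e^{-w_0/(2(1+c\gamma))}d\gamma}{\int_0^\infty\pi(\gamma)(1+c\gamma)^{-1/2}e^{-w_0/(2(1+c\gamma))}d\gamma}$; splitting both integrals at a large value $K$, bounding $(1+c\gamma)^{-3/2}\le(1+cK)^{-1}(1+c\gamma)^{-1/2}$ on $(K,\infty)$, bounding the part over $(0,K)$ by $e^{-w_0/(2(1+cK))}\int_0^\infty\pi(\gamma)(1+c\gamma)^{-3/2}d\gamma$, and using that $\pi$ has positive mass on $(2K,3K)$ to lower-bound the denominator, one gets $S_\infty(w_0)\le(1+cK)^{-1}+\mathrm{const}\cdot e^{-\kappa_K w_0}\to0$ as $w_0\to\infty$, so any large enough $w_0$ has $S_\infty(w_0)<1/2$. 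If $c=\infty$, I would substitute $\gamma=t/v_n$: condition \eqref{eq:tuning_exp_cond1} says $I:=\int_0^\infty\gamma^{-1/2}\pi(\gamma)d\gamma<\infty$, so $\mu_n$, the law of $t=v_n\gamma$ when $\gamma$ has density $\gamma^{-1/2}\pi(\gamma)/I$, is a probability measure, and $\pi(t/v_n)dt=v_n^{1/2}I\,t^{1/2}\mu_n(dt)$ turns $S_n(w)$ into $\frac{\int(1+t)^{-3/2}t^{1/2}e^{-w/(2(1+t))}\mu_n(dt)}{\int(1+t)^{-1/2}t^{1/2}e^{-w/(2(1+t))}\mu_n(dt)}$. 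Since $v_n\to\infty$ forces $v_n\gamma\to\infty$ almost surely, $\mu_n\Rightarrow\delta_{+\infty}$ on $[0,+\infty]$, and both $t\mapsto(1+t)^{-1/2}t^{1/2}e^{-w/(2(1+t))}$ and $t\mapsto(1+t)^{-3/2}t^{1/2}e^{-w/(2(1+t))}$ extend continuously to $[0,+\infty]$ with values $1$ and $0$ at $+\infty$, so the ratio tends to $0/1=0$; thus $S_n(w)\to0$ for every fixed $w>0$ and any $w_0$ works.

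Combining the cases, there is a finite $w_0$ with $S_n(w_0)<1/2$ for all large $n$, so $\liminf_n P(i\in\mathcal{A}_n)\ge P(\chi^2_1\ge w_0)>0$, contradicting $P(\mathcal{A}_n=\mathcal{A})\to1$. I expect the delicate step to be the $c=\infty$ regime: recognizing that hypothesis \eqref{eq:tuning_exp_cond1} is exactly what makes $\gamma^{-1/2}\pi(\gamma)$ an integrable weight and lets the rescaled mass escape to $+\infty$, and then justifying the passage to the limit (that the relevant normalizing constants stay bounded away from $0$ and $\infty$, with no mass lost near $\gamma=0$). The monotonicity of $S_n$, the dominated convergence when $c<\infty$, and the tail split showing $S_\infty(w_0)\to0$ are comparatively routine.
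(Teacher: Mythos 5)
Your proof is correct in its essentials but executes the argument differently from the paper. The paper's proof drops the exponential tilt entirely: it bounds $E(1-s_i\mid\tau_n,\bY)\ge h_n\exp\{-n\hat\beta_i^2/(2\sigma^2)\}$ with $h_n=\int_0^\infty\gamma(\tfrac{1}{n\tau_n}+\gamma)^{-3/2}\pi(\gamma)d\gamma\big/\int_0^\infty(\tfrac{1}{n\tau_n}+\gamma)^{-1/2}\pi(\gamma)d\gamma$, uses dominated convergence (enabled by \eqref{eq:tuning_exp_cond1}, exactly as in your $c<\infty$ case) to show $h_n$ tends to a positive limit, and then argues that $h_n\exp\{-n\hat\beta_i^2/(2\sigma^2)\}$ converges in distribution to a variable exceeding $1/2$ with positive probability. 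You instead keep the exact functional $S_n(w)=E(s_i\mid\tau_n,\bY)$, prove it is nonincreasing in $w=n\hat\beta_i^2/\sigma^2$ via the covariance/association identity, and reduce everything to exhibiting one finite $w_0$ with $\liminf_n S_n(w_0)<1/2$; the full support of $\chi^2_1$ then does the rest. What your route buys is a cleaner endgame: the paper's final step tacitly needs its limiting constant $h_\infty$ to exceed $1/2$ (its assertion that the limit law has support on $(0,1)$ is only immediate when $c=\infty$, where $h_n\to1$), whereas your criterion $S_\infty(w_0)<1/2$ is verified directly by sending $w_0\to\infty$ (for $c<\infty$) or by the rescaling $t=n\tau_n\gamma$ and escape of mass to $+\infty$ (for $c=\infty$, which is a nice, genuinely new ingredient). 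The cost is extra machinery (monotone association, weak convergence on the compactified half-line) that the paper avoids.

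Two small points to tighten. First, your lower bound on the denominator of $S_\infty(w_0)$ assumes $\pi$ has positive mass on $(2K,3K)$ for the particular large $K$ you choose; this is not guaranteed for an arbitrary $\pi$ satisfying only \eqref{eq:tuning_exp_cond1}. Replace $(2K,3K)$ by $(2K,\infty)$: the bound $e^{-w_0/(2(1+c\gamma))}\ge e^{-w_0/(2(1+2cK))}$ still holds there, $\int_{2K}^\infty(1+c\gamma)^{-1/2}\pi(\gamma)d\gamma$ is finite and positive whenever $\pi$ has unbounded support, and the same exponential gap $e^{-\kappa_K w_0}$ results. Second, and relatedly, the step $S_\infty(w_0)\to0$ as $w_0\to\infty$ genuinely requires $\pi$ to have unbounded support (if $\pi$ is supported on $(0,B]$ with $cB<1$, then $S_n(w)\ge(1+n\tau_nB)^{-1}>1/2$ for all $w$ and your $w_0$ does not exist). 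This caveat is shared by the paper's proof and is harmless for the prior classes \eqref{eq:polynomial} and \eqref{eq:exponential} actually at issue, all of which have full support on $(0,\infty)$, but it is worth stating explicitly since the proposition is phrased for general $\pi$.
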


\noindent
{\bf Remark.} Proposition \ref{prop:not_vs} holds for both polynomial-tailed and exponential-tailed priors. The finite integral condition is not very restrictive for exponential-tailed priors of $\gamma_i$. In fact, the three exponential-tailed distributions in Table \ref{table:priors} satisfies the condition. However, it excludes horseshoe priors and some other priors in the polynomial-tailed prior class.

\begin{proposition}\label{prop:not_an}
Suppose $\pi(\gamma_i)$ is a proper exponential-tailed prior of the form \eqref{eq:exponential} and there exist $0 < m \leq M < \infty$ such that
\begin{equation}\label{eq:tuning_exp_cond2}
m < L(t) < M\textrm{ for all }t \in (0, \infty).
\end{equation}
If $n\tau_n \rightarrow 0$ as $n \rightarrow \infty$, the HT procedure cannot achieve optimal estimation rate. 
\end{proposition}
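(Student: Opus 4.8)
The plan is to argue by contradiction: assume the HT procedure attains the optimal estimation rate, so that for every fixed $i\in\mathcal{A}$ the scalar $\sqrt{n}(\hat\beta_i^{HT}-\beta_i^0)$ converges in distribution to $N(0,\sigma^2)$ and is in particular $O_p(1)$; I will show that instead $\abs{\sqrt{n}(\hat\beta_i^{HT}-\beta_i^0)}\overset{p}{\rightarrow}\infty$ for each $i\in\mathcal{A}$. Fix $i\in\mathcal{A}$ and abbreviate $w_i:=E(s_i\mid\tau_n,\bY)\in(0,1)$, so that by the orthogonal-design identities of Section 3 the HT estimator is $\hat\beta_i^{HT}=(1-w_i)\hat\beta_i\,I(w_i<1/2)$. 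Since $\hat\beta_i\sim N(\beta_i^0,\sigma^2/n)$ we have $\hat\beta_i\overset{p}{\rightarrow}\beta_i^0\neq0$ and $\sqrt{n}(\hat\beta_i-\beta_i^0)=O_p(1)$, so everything hinges on the size of $w_i$: on $\{w_i<1/2\}$ one has $\sqrt{n}(\hat\beta_i^{HT}-\beta_i^0)=\sqrt{n}(\hat\beta_i-\beta_i^0)-\sqrt{n}\,w_i\hat\beta_i$, while on $\{w_i\geq1/2\}$ one has $\hat\beta_i^{HT}=0$ and $\abs{\sqrt{n}(\hat\beta_i^{HT}-\beta_i^0)}=\sqrt{n}\abs{\beta_i^0}\rightarrow\infty$.

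The core step is the lower bound: for some constant $c_0>0$ depending only on $m,M,b,\sigma,\beta_i^0$,
\[ P\!\left(w_i\ \geq\ \frac{c_0}{1+n\sqrt{\tau_n}}\right)\ \longrightarrow\ 1\qquad(i\in\mathcal{A}). \]
To prove it I would use that, given $\bY$, the posterior density of $v_i:=1+n\tau_n\gamma_i$ on $(1,\infty)$ is, up to the bounded factor $L\!\left((v-1)/(n\tau_n)\right)\in(m,M)$, proportional to $v^{-1/2}\exp\!\left\{-\tfrac{b}{n\tau_n}v-\tfrac{R_n}{v}\right\}$, where $R_n:=n\hat\beta_i^2/(2\sigma^2)$; write $\mu_n$ for the corresponding normalized density. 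Since $s_i=1/v_i$, bounding $L$ above and below and then applying Jensen's inequality gives $w_i=E(1/v_i\mid\bY)\geq(m/M)\,E_{\mu_n}[1/v]\geq(m/M)/E_{\mu_n}[v]$, so it suffices to bound $E_{\mu_n}[v]$ from above. The exponent of $\mu_n$ is maximized at $v_n^{\ast}:=\max\{1,\ n\sqrt{\tau_n}\abs{\hat\beta_i}/(\sqrt{2b}\,\sigma)\}$, and because $\tau_n\rightarrow0$ the mass of $\mu_n$ concentrates around $v_n^{\ast}$; a Laplace-type estimate — dealing separately with the interior maximum (when $n\sqrt{\tau_n}\abs{\hat\beta_i}>\sqrt{2b}\,\sigma$) and the boundary maximum at $v=1$ — then gives $E_{\mu_n}[v]=O_p(v_n^{\ast})$, which is $O_p(1+n\sqrt{\tau_n})$ since $\hat\beta_i\overset{p}{\rightarrow}\beta_i^0$, and the displayed bound follows. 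When additionally $n^2\tau_n\rightarrow\infty$, this two-sided control of $w_i$ can instead be read off from Theorem \ref{thm:tuning_exp_an}; carrying out the Laplace estimate directly is needed only to cover the case $n^2\tau_n\not\to\infty$, and the concentration of $\mu_n$ at $v_n^{\ast}$ uses only the hypothesis $n\tau_n\rightarrow0$.

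Granting the lower bound, the conclusion is immediate. Since $n\tau_n\rightarrow0$,
\[ \frac{\sqrt{n}}{1+n\sqrt{\tau_n}}\ \geq\ \min\!\left\{\frac{\sqrt{n}}{2},\ \frac{1}{2\sqrt{n\tau_n}}\right\}\ \longrightarrow\ \infty, \]
so $\sqrt{n}\,w_i\overset{p}{\rightarrow}\infty$ and hence $\abs{\sqrt{n}\,w_i\hat\beta_i}\overset{p}{\rightarrow}\infty$. On $\{w_i<1/2\}$ the decomposition of the first paragraph gives $\abs{\sqrt{n}(\hat\beta_i^{HT}-\beta_i^0)}\geq\abs{\sqrt{n}\,w_i\hat\beta_i}-\abs{\sqrt{n}(\hat\beta_i-\beta_i^0)}\overset{p}{\rightarrow}\infty$, the subtracted term being $O_p(1)$; on $\{w_i\geq1/2\}$ it equals $\sqrt{n}\abs{\beta_i^0}\rightarrow\infty$. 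Combining the two events, $\abs{\sqrt{n}(\hat\beta_i^{HT}-\beta_i^0)}\overset{p}{\rightarrow}\infty$, contradicting $O_p(1)$. I expect the Laplace bound on $E_{\mu_n}[v]$ — equivalently, an upper bound on $E(\gamma_i\mid\bY)$ for $i\in\mathcal{A}$, which is exactly where the exponential tail of $\pi(\gamma_i)$ and the boundedness of $L$ genuinely enter — to be the only delicate step; the rest is routine manipulation of the orthogonal-design formulas.
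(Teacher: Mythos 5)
Your proposal is correct in its overall strategy and reaches the right conclusion, but it travels a genuinely different route from the paper. The paper also reduces everything to showing $\sqrt{n}\,E(s_i\mid\tau_n,\bY)\overset{p}{\rightarrow}\infty$ and uses the same sandwich $\frac{m}{M}\tilde{S}_n^{(i)}\leq E(s_i\mid\tau_n,\bY)\leq\frac{M}{m}\tilde{S}_n^{(i)}$ coming from the bounds on $L$; but it then evaluates $\tilde{S}_n^{(i)}$ \emph{exactly} by recognizing the numerator and denominator as normalizing constants of Inverse Gaussian densities, expresses them through the closed-form IG distribution function, and runs a case analysis on the limit of $n^2\tau_n$ (to $\infty$, to $c\in(0,\infty)$ with sub-cases on $b_n$, and to $0$). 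You instead combine the $m/M$ sandwich with Jensen's inequality, $E(1/v\mid\bY)\geq 1/E_{\mu_n}[v]$, and a Laplace-type upper bound $E_{\mu_n}[v]=O_p(1+n\sqrt{\tau_n})$, which yields the single quantitative bound $w_i\gtrsim(1+n\sqrt{\tau_n})^{-1}$ and hence $\sqrt{n}w_i\overset{p}{\rightarrow}\infty$ directly from $n\tau_n\to0$, with no case-splitting. This buys two things: it avoids the (tacit) assumption in the paper that $n^2\tau_n$ converges, and your explicit treatment of the event $\{w_i\geq 1/2\}$ (where $\hat\beta_i^{HT}=0$ and the error is $\sqrt{n}|\beta_i^0|$) is actually more careful than the paper's bare assertion that the indicator term vanishes, which for exponential tails is only guaranteed by Proposition~\ref{prop:s_consistency} when $n^2\tau_n\to\infty$. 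What you lose is precision: the paper's exact computation also delivers $n\sqrt{\tau_n}\tilde{S}_n^{(i)}\overset{p}{\rightarrow}\sqrt{2b}\sigma/|\beta_i^0|$, which is what Theorem~\ref{thm:tuning_exp_an} needs, whereas your order-of-magnitude bound would not suffice there. (Also, invoking Theorem~\ref{thm:tuning_exp_an} for the case $n^2\tau_n\to\infty$ would be circular, since in the paper that theorem is derived from this proposition's proof; fortunately your Laplace argument covers that case anyway.)

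The one step you have not actually carried out is the Laplace estimate $E_{\mu_n}[v]=O_p(v_n^*)$, and you should be aware that the two regimes you name (interior maximum, boundary maximum at $v=1$) do not quite exhaust the difficulty: in the transition regime where $n\sqrt{\tau_n}|\hat\beta_i|/(\sqrt{2b}\sigma)\to 1$, the critical point sits at or near the boundary with a small first derivative there, and a naive boundary-Laplace expansion degenerates. The claim is still true — the clean way to see it is the substitution $v=v^\dagger u$ with $v^\dagger=\sqrt{R_n n\tau_n/b}$, which gives $E_{\mu_n}[v]=v^\dagger\,\frac{\int_{1/v^\dagger}^{\infty}u^{1/2}e^{-\kappa(u+1/u)}\,du}{\int_{1/v^\dagger}^{\infty}u^{-1/2}e^{-\kappa(u+1/u)}\,du}$ with $\kappa=\sqrt{2b}\,|\hat\beta_i|/(\sigma\sqrt{\tau_n})\overset{p}{\rightarrow}\infty$, and a uniform concentration argument near $u=\max\{1,1/v^\dagger\}$ handles all regimes at once — but as written your proof has a genuine (if fillable) gap exactly at the step you flagged as delicate.
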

\noindent
{\bf Remark.}  The proofs of Theorem \ref{thm:tuning_exp} implies the HT procedure over-shinks the nonzero coefficients and the convergence rate is slower than $n^{1/2}$.

Combining the above propositions, we have the following theorem:
\begin{theorem}\label{thm:tuning_exp}
If $\pi(\gamma_i)$ is a proper exponential-tailed prior of the form \eqref{eq:exponential} and it satisfies conditions \eqref{eq:tuning_exp_cond1} and \eqref{eq:tuning_exp_cond2}, then the HT procedure does not have the oracle property for any choice of $\tau_n$.
\end{theorem}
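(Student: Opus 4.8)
The plan is to proceed by contradiction, splitting according to the asymptotic magnitude of $n\tau_n$, and to dispose of each regime with one of the two negative results already established. The point to notice is that the two hypotheses of the theorem are tailored to exactly this: condition \eqref{eq:tuning_exp_cond1} is precisely what Proposition~\ref{prop:not_vs} requires, and condition \eqref{eq:tuning_exp_cond2} is precisely what Proposition~\ref{prop:not_an} requires, so under the assumptions of the theorem both propositions are simultaneously at our disposal.

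Assume, for contradiction, that some sequence $\{\tau_n\}_{n=1}^{\infty}$ makes the HT procedure oracle, so that at once $P(\mathcal{A}_n = \mathcal{A}) \rightarrow 1$ and $\sqrt{n}(\hat{\bbeta}_{\mathcal{A}}^{HT} - \bbeta_{\mathcal{A}}^{0}) \overset{d}{\rightarrow} N(0, \sigma^2 \mathbf{I}_{p_0})$. If $n\tau_n \rightarrow 0$, then, since $\pi(\gamma_i)$ is a proper exponential-tailed prior satisfying \eqref{eq:tuning_exp_cond2}, Proposition~\ref{prop:not_an} applies and says the HT procedure fails to attain the optimal estimation rate, contradicting the assumed asymptotic normality. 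If instead $n\tau_n$ does not tend to $0$, then $\limsup_n n\tau_n = c > 0$, so there is a subsequence $\{n_k\}$ with $n_k \tau_{n_k} \rightarrow c' \in (0, \infty]$; running the argument of Proposition~\ref{prop:not_vs} (which uses only \eqref{eq:tuning_exp_cond1}) along this subsequence shows that at least one truly null coefficient is retained with probability bounded away from $0$, whence $\liminf_k P(\mathcal{A}_{n_k} = \mathcal{A}) < 1$, contradicting $P(\mathcal{A}_n = \mathcal{A}) \rightarrow 1$. The two regimes are exhaustive, so no choice of $\tau_n$ can yield the oracle property.

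The one step that requires genuine care is the subsequence argument, which is why I have phrased it separately: Proposition~\ref{prop:not_vs} is stated for sequences with $n\tau_n \rightarrow c \in (0, \infty]$, whereas in the second case $n\tau_n$ need only fail to converge to $0$. I would therefore revisit the proof of Proposition~\ref{prop:not_vs} and extract from it, for some fixed $j \notin \mathcal{A}$, a lower bound on $P(j \in \mathcal{A}_n)$ that depends on $\tau_n$ only through a lower bound of the form $n\tau_n \geq \delta > 0$ — equivalently, a bound keeping $E(s_j \mid \bY)$ away from $1$ uniformly over all such $\tau_n$, obtained by controlling the relevant marginal integrals with the help of \eqref{eq:tuning_exp_cond1}. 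Once such a uniform estimate is in hand it can be applied along $\{n_k\}$, and the rest is just bookkeeping with the definition of the oracle property; the reduction to these two cases and the invocation of the two propositions is then immediate.
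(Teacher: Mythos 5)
Your proof is correct and follows the same route as the paper, which disposes of Theorem~\ref{thm:tuning_exp} in one line as ``a direct result of Propositions \ref{prop:not_vs} and \ref{prop:not_an}.'' The one thing you add---the subsequence argument handling the case where $n\tau_n$ neither tends to $0$ nor converges in $(0,\infty]$---is a genuine gap that the paper's one-line proof glosses over, and your patch is sound: the proof of Proposition~\ref{prop:not_vs} applies verbatim along any subsequence with $n_k\tau_{n_k}\rightarrow c'\in(0,\infty]$, so the uniform bound you contemplate in your last paragraph is not even needed.
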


%\begin{theorem}\label{thm:tuning_exp}
%Suppose a proper exponential-tailed prior of the form \eqref{eq:exponential} further satisfies $\int_0^{\infty} \gamma_i^{-1/2}\pi(\gamma_i)d\gamma_i < \infty$ and $\int_0^{\infty} \gamma_i \pi(\gamma_i)d\gamma_i < \infty$. Furthermore, there exist $0 < m \leq M < \infty$ such that $m < L(t) < M$ for all $t \in (0, \infty)$. The HT procedure does not have the oracle property for any choice of $\tau_n$. More specifically, if $n\tau_n \rightarrow 0$, the optimal estimation rate cannot be achieved. If $n\tau_n \rightarrow c \in (0, \infty]$, the variable selection consistency cannot be achieved.
%\end{theorem}

\noindent
{\bf Remark.} As a special case, Theorem \ref{thm:tuning_exp} implies that the HT procedure lacks oracle if the prior introduced in \citet{park2008bayesian} is used. 
% prior on tau
%\subsection{A hyper-prior on $\tau$}
%We have shown in Theorem \ref{thm:tuning_poly} that if $\tau_n$ is chosen to decay to zero at an appropriate rate, the HT procedure is oracle. Naturally, if a hyper-prior $\pi_n^{\tau}$ that concentrates most of the probability mass around the $\tau_n$ is placed on $\tau$, the HT threshold should still be oracle. With this observation, we have the following theorem:
%
%\begin{theorem}\label{thm:prior_poly}
%Suppose that a proper polynomial-tailed prior of the form \eqref{eq:polynomial} is assumed for $\gamma_i, i=1, \ldots, p$ with $0<a<1$. We further assume that 
%\begin{equation}\label{eq:L_bound}
%0 < L(t) < M, ~\mbox{for all} ~ t > 0,
%\end{equation}
%and 
%\begin{equation}\label{eq:L_increasing}
%L(t)~ \mbox{is an increasing function}.
%\end{equation}
%We also place a prior $\pi_n^{\tau}$ with support $(\psi_n/2, \psi_n)$ on $\tau$. If $p_n(n\psi_n)^\epsilon \rightarrow 0$ for some $\epsilon \in (0,a)$ and $\log \psi_n/\sqrt{n} \rightarrow 0$ as $n \rightarrow \infty$, then the HT procedure \eqref{eq:Criterion} is oracle.
%\end{theorem}

% simulation
\section{Simulation Results}
In this section we apply the HT technique to the TPBN priors and the double-exponential (DE) priors, and compare them with the lasso, the
adaptive lasso, and the MLE in terms of prediction
performance and variable selection accuracy, when applicable. TPBN
priors are used with three different sets hyper-parameters: $a=0.5,u=0.5$,
which gives the horseshoe prior; $a=0.5,u=0.1$, which places more
probability mass around 0 than the horseshoe; $a=0.5,u=1$, which
results the normal-exponential-gamma (NEG) prior. We denote these 3 priors as TPBN-HS, TPBN-0.1, TPBN-NEG, respectively. We generate data
from $\bY=\mathbf{X}{\bbeta}+\boldsymbol{\epsilon}$, where $\boldsymbol{\epsilon}\sim{N}_{n}\left(0,\sigma^{2}\mathbf{I}_{n}\right)$.
To compare the prediction performance, we report the relative prediction
error (RPE) $E\left[\left(\hat{y}-\mathbf{x}^{T}{\bbeta}\right)^{2}\right]/\sigma^{2}$. To measure the variable selection accuracy, we use the misclassification error which is the proportion of variables incorrectly identified.

We use the LARS algorithm \citep{efron2004least} to fit both the
lasso and the adaptive lasso. The penalty parameter $\lambda$ is
chosen by 5-fold cross validation. Parameter $\gamma$ in the adaptive
lasso is fixed to be one so that the adaptive weights are the reciprocal
of the least square estimates. The Bayesian lasso and TPBN methods
are fit with the Gibbs sampler using the \texttt{rstan} package \citep{stan_development_team_stan:_2014}. The first example below was used in \citet{zou2006adaptive}.

Example 1 (A few large effects and a few 0s) We let 
\[
{\bbeta}=\left(3,1.5,0,0,2,0,0,0\right),
\]
the columns of $\mathbf{X}$ are normal vectors and the pairwise correlation
between $\mathbf{x}_{i}$ and $\mathbf{x}_{j}$ are 0.5 for $i\neq j$. We
let $\sigma=1,3,5$ and $n=20,50,80$ to compare the performance with
varying signal to noise ratios and sample sizes.

Example 2 (A few large effects, a few small effects, and a few 0s) We let
\[
\bbeta=\left(3,1.5,0.1,0.01,2,0,0,0\right),
\]
and the rest of the set up is the same as Example 1.

For each example, we generate 50 training and corresponding testing data sets. Each model is fit on the training set and the RPE is computed on the test set. We summarize the RPE results in Table \ref{table:RPE} and use the asterisk to denote the model with the smallest RPE. A couple of observations can be made from Table \ref{table:RPE}:
\begin{itemize}
 \item In general, the HT methods with TPBN priors are the best in prediction for both examples, especially with $u=0.1$ which gives the lowest RPE in 5 out of 9 cases.
 \item A smaller $u$ works better when the signal to noise ratio is low and the sample size is big,  while a larger $u$ results in better prediction performance when the signal to noise ratio is high.
 \item TPBN-0.1 predicts better than the adaptive lasso in almost all cases for both examples.
 \item The performance of the HT method with DE priors is comparable to the lasso in terms of prediction.
\end{itemize}

The mean misclassification errors are summarized in Table \ref{table:mis}. We also give the number of predictors chosen for each method in each set up in Table \ref{table:number_predictors}. TPBN-0.1 is the best model for variable selection in general. It has slightly lower misclassification error than the adaptive lasso and tends to be more parsimonious. The TPBN priors with larger $u$ including the horseshoe prior select significantly more variables and thus result in a much higher misclassification error due to less point mass around 0. The DE prior also tends to select most variables and does not perform well for variable selection, which agrees with our theoretical results. Finally, the lasso, which does not satisfy the oracle property, is not as accurate as the adaptive lasso and the TPBN-0.1.

\begin{table}[htb]
\centering
\caption{Median relative prediction error (RPE) for seven
methods in two simulation examples, based on 50 replications.} 
\label{table:RPE}
\small
\begin{tabular}{clllllllllll}
\hline
  & \multicolumn{3}{c}{$n=20$} & \phantom{0}& \multicolumn{3}{c}{$n=50$} & \phantom{0} & \multicolumn{3}{c}{$n=80$} \\
  \hline
  & $\sigma=1$ & $\sigma=3$ & $\sigma=5$ && $\sigma=1$ & $\sigma=3$ & $\sigma=5$ && $\sigma=1$ & $\sigma=3$ & $\sigma=5$ \\ 
\hline
Example\text{ }1\\
LS & 1.74 & 1.78 & 1.75 &  & 1.20 & 1.20 & 1.21 &  & 1.13 & 1.13 & 1.13 \\ 
  lasso & 1.54 & 1.50 & 1.51 &  & 1.16 & 1.14 & 1.15 &  & 1.12 & 1.11 & 1.10 \\ 
  adap lasso & 1.96 & 1.72 & 1.60 &  & 1.18 & 1.18 & 1.17 &  & 1.15 & 1.10 & 1.13 \\ 
  DE & 1.59 & 1.40 & 1.42* &  & 1.18 & 1.14 & 1.16 &  & 1.12 & 1.11 & 1.10 \\ 
  TPBN-HS & 1.52 & 1.38* & 1.46 &  & 1.14 & 1.12 & 1.13* &  & 1.11 & 1.08 & 1.08* \\ 
  TPBN-0.1 & 1.36* & 1.48 & 1.58 &  & 1.10* & 1.11* & 1.16 &  & 1.07* & 1.07* & 1.09 \\ 
  TPBN-NEG & 1.60 & 1.38* & 1.45 &  & 1.17 & 1.14 & 1.14 &  & 1.12 & 1.10 & 1.10 \\ 
  \hline
Example\text{ }2\\
LS & 1.60 & 1.62 & 1.90 &  & 1.22 & 1.22 & 1.24 &  & 1.09 & 1.11 & 1.12 \\ 
  lasso & 1.56 & 1.51 & 1.42* &  & 1.17 & 1.16 & 1.18 &  & 1.09 & 1.10 & 1.09* \\ 
  adap lasso & 1.61 & 1.99 & 1.67 &  & 1.20 & 1.21 & 1.19 &  & 1.08 & 1.11 & 1.12 \\ 
  DE & 1.52 & 1.53 & 1.47 &  & 1.19 & 1.16 & 1.16 &  & 1.09 & 1.10 & 1.09* \\ 
  TPBN-HS & 1.38 & 1.46* & 1.49 &  & 1.16 & 1.12 & 1.15* &  & 1.08 & 1.08 & 1.09* \\ 
  TPBN-0.1 & 1.34* & 1.73 & 1.57 &  & 1.13* & 1.11* & 1.21 &  & 1.07* & 1.07* & 1.11 \\ 
  TPBN-NEG & 1.47 & 1.50 & 1.42* &  & 1.19 & 1.15 & 1.15* &  & 1.09 & 1.09 & 1.09* \\ 
   \hline
   
\end{tabular}
\end{table}
\begin{table}[ht]
\centering
\caption{Mean misclassification error based on 50 replications.} 
\label{table:mis}
\small
\begin{tabular}{cccccccccccc}
\hline
  & \multicolumn{3}{c}{$n=20$} & \phantom{0}& \multicolumn{3}{c}{$n=50$} & \phantom{0} & \multicolumn{3}{c}{$n=80$} \\
  \hline
  & $\sigma=1$ & $\sigma=3$ & $\sigma=5$ && $\sigma=1$ & $\sigma=3$ & $\sigma=5$ && $\sigma=1$ & $\sigma=3$ & $\sigma=5$ \\ 
  \hline
  Example\text{ }1\\
lasso & 0.30 & 0.26 & 0.24 &  & 0.31 & 0.25 & 0.28 &  & 0.37 & 0.30 & 0.32 \\ 
  adap lasso & 0.16 & 0.19 & 0.28 &  & 0.19 & 0.16 & 0.17 &  & 0.17 & 0.18 & 0.16 \\ 
  DE & 0.53 & 0.37 & 0.34 &  & 0.60 & 0.47 & 0.43 &  & 0.59 & 0.56 & 0.49 \\ 
  TPBN-HS & 0.41 & 0.25 & 0.27 &  & 0.51 & 0.27 & 0.25 &  & 0.55 & 0.37 & 0.32 \\ 
  TPBN-0.1 & 0.12 & 0.19 & 0.28 &  & 0.12 & 0.09 & 0.18 &  & 0.10 & 0.08 & 0.16 \\ 
  TPBN-NEG & 0.52 & 0.33 & 0.30 &  & 0.60 & 0.47 & 0.39 &  & 0.59 & 0.57 & 0.47 \\ 
  \hline
  Example\text{ }2\\
  lasso & 0.41 & 0.30 & 0.37 &  & 0.37 & 0.40 & 0.39 &  & 0.35 & 0.37 & 0.33 \\ 
  adap lasso & 0.26 & 0.27 & 0.33 &  & 0.28 & 0.27 & 0.29 &  & 0.25 & 0.24 & 0.26 \\ 
  DE & 0.57 & 0.44 & 0.43 &  & 0.60 & 0.50 & 0.45 &  & 0.61 & 0.60 & 0.50 \\ 
  TPBN-HS & 0.46 & 0.35 & 0.36 &  & 0.53 & 0.36 & 0.36 &  & 0.55 & 0.49 & 0.37 \\ 
  TPBN-0.1 & 0.24 & 0.29 & 0.35 &  & 0.24 & 0.19 & 0.28 &  & 0.22 & 0.25 & 0.27 \\ 
  TPBN-NEG & 0.57 & 0.43 & 0.40 &  & 0.61 & 0.51 & 0.42 &  & 0.61 & 0.60 & 0.48 \\ 
\hline
\end{tabular}
\end{table}
\begin{table}[ht]

\centering
\caption{Mean number of predictors selected for six
methods in two examples, based on 50 replications.} 
\label{table:number_predictors}
\small
\begin{tabular}{cccccccccccc}
  \hline
  & \multicolumn{3}{c}{$n=20$} & \phantom{0}& \multicolumn{3}{c}{$n=50$} & \phantom{0} & \multicolumn{3}{c}{$n=80$} \\
%  \cmidrule{2-4} \cmidrule{6-8} \cmidrule{10-12}
  & $\sigma=1$ & $\sigma=3$ & $\sigma=5$ && $\sigma=1$ & $\sigma=3$ & $\sigma=5$ && $\sigma=1$ & $\sigma=3$ & $\sigma=5$ \\ 
\hline
  Example\text{ }1\\
lasso & 5.40 & 4.94 & 3.72 &  & 5.48 & 5.00 & 5.02 &  & 5.94 & 5.36 & 5.52 \\ 
  adap lasso & 3.92 & 3.42 & 2.62 &  & 4.24 & 3.82 & 3.50 &  & 4.20 & 4.24 & 3.90 \\ 
  DE & 7.20 & 5.62 & 4.72 &  & 7.80 & 6.78 & 6.38 &  & 7.74 & 7.44 & 6.96 \\ 
  TPBN-HS & 6.28 & 4.56 & 3.54 &  & 7.08 & 5.18 & 4.74 &  & 7.40 & 5.98 & 5.56 \\ 
  TPBN-0.1 & 3.98 & 3.20 & 2.22 &  & 4.00 & 3.44 & 3.06 &  & 3.80 & 3.60 & 3.56 \\ 
  TPBN-NEG & 7.18 & 5.38 & 4.52 &  & 7.80 & 6.80 & 6.06 &  & 7.76 & 7.58 & 6.78 \\ 
  \hline
  Example\text{ }2\\
lasso & 4.67 & 4.00 & 3.79 &  & 5.10 & 5.13 & 4.70 &  & 4.85 & 4.58 & 4.14 \\ 
  adap lasso & 2.81 & 2.15 & 2.22 &  & 3.26 & 3.06 & 3.18 &  & 3.17 & 2.88 & 2.65 \\ 
  DE & 7.12 & 5.66 & 4.30 &  & 7.71 & 6.57 & 5.64 &  & 7.80 & 7.36 & 6.31 \\ 
  TPBN-HS & 5.86 & 4.28 & 2.90 &  & 6.82 & 4.55 & 4.22 &  & 7.13 & 5.84 & 4.46 \\ 
  TPBN-0.1 & 3.06 & 2.68 & 1.81 &  & 2.99 & 2.27 & 2.35 &  & 2.91 & 2.90 & 2.81 \\ 
  TPBN-NEG & 7.15 & 5.50 & 4.11 &  & 7.75 & 6.48 & 5.24 &  & 7.80 & 7.45 & 6.08 \\ 
\hline
\end{tabular}

\end{table}

% discussion
\section{Discussion}
In this paper, we consider Bayesian variable selection problem of linear regression model with global-local shrinkage priors on the regression coefficients. Our proposed variable selection procedure selects a variable if the ratio of the posterior mean to the ordinary least square estimate of the corresponding coefficient is greater than $1/2$. With orthogonal design matrices, we show that if the local parameters have polynomial-tailed priors, our proposed method is oracle in the sense that it can achieve variable selection consistency and optimal estimation rate at the same time. However, if, instead, an exponential-tailed prior is used for the local parameters, the proposed method does not have the oracle property.

Although the theoretical results are obtained only under the assumption of orthogonal designs, the simulation study shows the performance of our method is similar when applied to design matrices with moderate correlation. 

Because of the use of the ordinary least square estimate in the proposed method, we only consider the situation when the sample size is greater than the number of predictors in the model. In the case that $p > n$, the inverse of $\bX^T\bX$ does not exist. One possible way to get around this is to use a generalized inverse, say the Moore-Penrose generalized inverse. Since generalized inverse is not unique, it is critical to examine if different choices can produce consistent selection results or if certain choice of the generalized inverse will outperform others.

\appendix
\section{Technical proofs of results in Section 3}
% properties of slowly varying functions
\begin{lemma}[Properties of slowly varying functions]\label{lemma:svf} If $L$ is a slowly varying function, then
\begin{enumerate}[{(i)}]
\item $L^\alpha$ is slowly varying for all $\alpha \in \mathbb{R}$.\label{eq:svf_1}
\item ${\log L(x)}/{\log x} \rightarrow 0$, as $x\rightarrow \infty$.\label{eq:svf_2}
\item $x^{-\alpha} L(x) \rightarrow 0$ and $x^\alpha L(x) \rightarrow \infty$, as $x \rightarrow \infty$ for all $\alpha > 0$.\label{eq:svf_3}
\item for $\alpha < -1$, $-\frac{\int_x^\infty t^\alpha L(t) dt}{x^{\alpha+1} /(\alpha+1)} \rightarrow 1$, as $x \rightarrow \infty$.\label{eq:svf_4}
\item there exist $A_0 > 0$ such that for $\alpha > -1$, $\frac{\int_{A_0}^x t^\alpha L(t) dt}{x^{\alpha+1} /(\alpha+1)} \rightarrow 1$, as $x \rightarrow \infty$.\label{eq:svf_5}
\end{enumerate}
\end{lemma}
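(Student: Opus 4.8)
The plan is to handle the five items in order of logical dependence, reducing everything to the Karamata representation theorem and the uniform convergence theorem for slowly varying functions \citep[Ch.~1]{bingham1987regular}. Item~(i) is immediate from the definition of slow variation: for every $\lambda>0$, $L^\alpha(\lambda x)/L^\alpha(x)=\bigl(L(\lambda x)/L(x)\bigr)^\alpha$, and since the base tends to $1$ and $z\mapsto z^\alpha$ is continuous at $z=1$, the ratio tends to $1$.

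For~(ii) I would invoke the representation $L(x)=c(x)\exp\bigl(\int_A^x \epsilon(t)t^{-1}\,dt\bigr)$, valid for large $x$, where $c(x)\to c_0\in(0,\infty)$ and $\epsilon(t)\to 0$. Taking logarithms and dividing by $\log x$, the term $\log c(x)/\log x$ vanishes, and an elementary averaging argument (or L'H\^{o}pital applied to $\int_A^x \epsilon(t)t^{-1}\,dt$ over $\log x$) shows the remaining term tends to $0$ because $\epsilon(t)\to 0$; this gives $\log L(x)/\log x\to 0$. Item~(iii) is then a one-line corollary: for $\alpha>0$, $\log\bigl(x^{-\alpha}L(x)\bigr)=(\log x)\bigl(\log L(x)/\log x-\alpha\bigr)\to-\infty$, hence $x^{-\alpha}L(x)\to 0$, and symmetrically $x^{\alpha}L(x)\to\infty$.

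Items~(iv) and~(v) are the two halves of Karamata's theorem and carry the real content. For~(iv), with $\alpha<-1$, I would first use~(iii) to pick $\delta>0$ with $\alpha+\delta<-1$, so that $t^\alpha L(t)\le t^{\alpha+\delta}$ for large $t$ and the tail integral is finite. The substitution $t=xu$ gives $\int_x^\infty t^\alpha L(t)\,dt=x^{\alpha+1}L(x)\int_1^\infty u^\alpha\,\frac{L(xu)}{L(x)}\,du$, and the goal is to pass the limit through the integral: the uniform convergence theorem gives $L(xu)/L(x)\to 1$ uniformly on compact $u$-sets, while Potter's bounds supply an integrable dominating function of the form $u^{\alpha\pm\eta}$ on $[1,\infty)$, so the integral converges to $\int_1^\infty u^\alpha\,du=-1/(\alpha+1)$. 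Item~(v) is the same computation after $t=xu$, now over $u\in(A_0/x,1]$; the constant $A_0$ is chosen large enough that $L$ is locally bounded and bounded away from $0$ on $[A_0,\infty)$, which is exactly what is needed for Potter's bounds to dominate $u^\alpha L(xu)/L(x)$ on $(0,1]$ and let the integral tend to $\int_0^1 u^\alpha\,du=1/(\alpha+1)$.

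The only genuine obstacle is the interchange of limit and integral in~(iv)--(v): one must control $L(xu)/L(x)$ uniformly in the scaling variable $u$, not merely pointwise, which is precisely what Potter's bounds provide. If quoting those bounds is to be avoided, a self-contained alternative for~(iv) is to replace $L$ by an asymptotically equivalent smooth version from the representation theorem, set $H(x)=\int_x^\infty t^\alpha L(t)\,dt$, and apply L'H\^{o}pital to $H(x)\big/\bigl(x^{\alpha+1}L(x)/(\alpha+1)\bigr)$, reducing the claim to $xL'(x)/L(x)\to 0$; item~(v) admits an analogous treatment. I would keep this as a backup route.
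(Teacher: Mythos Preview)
Your sketches are correct and are precisely the standard arguments: item~(i) from the definition, items~(ii)--(iii) from the representation theorem, and items~(iv)--(v) via the substitution $t=xu$ together with Potter's bounds to justify the limit--integral interchange (with the L'H\^opital route as a sound alternative). The paper, however, does not prove this lemma at all: its entire proof is the sentence ``See Propositions 1.3.6, 1.5.8 and 1.5.10 of \citet{bingham1987regular}.'' So your approach is not so much different as strictly more detailed---you have reconstructed the arguments that the paper delegates to the reference.

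One small point worth flagging: your computation for~(iv), and likewise~(v), actually yields $\int_x^\infty t^\alpha L(t)\,dt\sim -x^{\alpha+1}L(x)/(\alpha+1)$, i.e.\ with an $L(x)$ factor in the asymptotic, whereas the lemma as printed omits that factor from the denominator. The version with $L(x)$ is what appears in the cited propositions of \citet{bingham1987regular} and is also what the paper invokes when it applies the lemma later, so the omission is a typographical slip in the statement rather than a defect in your argument.
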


\begin{proof}
See Propositions 1.3.6, 1.5.8 and 1.5.10 of \citet{bingham1987regular}.
\end{proof}

% E(1-s | \tau, Y) < ...
\begin{lemma} \label{lemma:noise_2}
Suppose $n\tau_n \rightarrow 0$, as $n \rightarrow \infty$.
\begin{enumerate}
\item If $\gamma_i$ has a proper polynomial-tailed prior described in \eqref{eq:polynomial} with $0 < a < 1$, then there exist $A_0 > 1$ such that
$$E(1 - s_i \, | \, \tau_n, \mathbf{Y}) \leq \frac{A_0 (n\tau_n)^a}{a(1-a)} L\left(\frac{1}{n\tau_n}\right)\exp\left(\frac{n\hat{\beta}_i}{2\sigma^2}\right)(1+o(1))$$
\item If $\gamma_i$ has a proper exponential-tailed prior described in \eqref{eq:exponential} and $C = \int_0^{\infty} \gamma_i \pi(\gamma_i) d\gamma_i < \infty$, then
$$E(1 - s_i \, | \, \tau_n, \mathbf{Y}) \leq Cn\tau_n \exp\left(\frac{n\hat{\beta}_i^2}{2\sigma^2}\right)(1 + o(1)).$$
\end{enumerate}
The $o(1)$ terms in both cases do not depend on $i$.
\end{lemma}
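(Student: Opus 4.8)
The plan is to rewrite $E(1 - s_i \mid \tau_n, \mathbf{Y})$ as an explicit ratio of integrals in $\gamma_i$ and to arrange all bounds so that the data enter only through the factor $\exp(n\hat{\beta}_i^2/(2\sigma^2))$. Under the orthogonal design $\hat{\beta}_i\mid\gamma_i \sim N(0,(\sigma^2/n)(1+n\tau_n\gamma_i))$ and the posterior of $\gamma_i$ depends on $\mathbf{Y}$ only through $\hat{\beta}_i$, so its density is proportional to $(1+n\tau_n\gamma_i)^{-1/2}\exp\{-n\hat{\beta}_i^2/(2\sigma^2(1+n\tau_n\gamma_i))\}\pi(\gamma_i)$. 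Writing $\rho_n = n\tau_n$, $t_i = n\hat{\beta}_i^2/(2\sigma^2)$ and using $1 - s_i = \rho_n\gamma_i/(1+\rho_n\gamma_i)$, I would first record
\[
E(1 - s_i \mid \tau_n, \mathbf{Y}) = \frac{\displaystyle\int_0^\infty \frac{\rho_n\gamma}{(1+\rho_n\gamma)^{3/2}}\exp\!\left(-\frac{t_i}{1+\rho_n\gamma}\right)\pi(\gamma)\,d\gamma}{\displaystyle\int_0^\infty (1+\rho_n\gamma)^{-1/2}\exp\!\left(-\frac{t_i}{1+\rho_n\gamma}\right)\pi(\gamma)\,d\gamma}.
\]

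For the denominator I would use $(1+\rho_n\gamma)^{-1}\le 1$ to bound it below by $e^{-t_i}\int_0^\infty (1+\rho_n\gamma)^{-1/2}\pi(\gamma)\,d\gamma$; since $\pi$ is proper and $\rho_n\to 0$, dominated convergence gives $\int_0^\infty (1+\rho_n\gamma)^{-1/2}\pi(\gamma)\,d\gamma = 1 + o(1)$ with the $o(1)$ depending only on $n\tau_n$, so the reciprocal of the denominator is $e^{t_i}(1+o(1))$ uniformly in $i$. For the numerator I would drop the exponential (it is at most $1$) and estimate $\int_0^\infty \rho_n\gamma(1+\rho_n\gamma)^{-3/2}\pi(\gamma)\,d\gamma$. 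In the exponential-tailed case, $\rho_n\gamma(1+\rho_n\gamma)^{-3/2}\le \rho_n\gamma$ gives at most $\rho_n\int_0^\infty \gamma\pi(\gamma)\,d\gamma = Cn\tau_n$, and combining with the denominator bound yields part (2). In the polynomial-tailed case the first moment diverges, so I would split the integral at $\gamma = 1/\rho_n$: on $\{\gamma\le 1/\rho_n\}$ use $\rho_n\gamma(1+\rho_n\gamma)^{-3/2}\le\rho_n\gamma$, on $\{\gamma>1/\rho_n\}$ use $\rho_n\gamma(1+\rho_n\gamma)^{-3/2}\le(\rho_n\gamma)^{-1/2}$, so with $\pi(\gamma)=\gamma^{-a-1}L(\gamma)$ the two pieces are bounded by $\rho_n\int_0^{1/\rho_n}\gamma^{-a}L(\gamma)\,d\gamma$ and $\rho_n^{-1/2}\int_{1/\rho_n}^\infty \gamma^{-a-3/2}L(\gamma)\,d\gamma$.

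Both pieces are then handled by Karamata's theorem: part~\ref{eq:svf_5} of Lemma~\ref{lemma:svf} (exponent $-a>-1$ since $a<1$) applied to the first, after observing that the contribution of the fixed range $(0,A_0)$ is a finite constant --- finiteness coming from properness of $\pi$ --- which is negligible against $(1/\rho_n)^{1-a}L(1/\rho_n)\to\infty$ by part~\ref{eq:svf_3} of Lemma~\ref{lemma:svf}; and part~\ref{eq:svf_4} of Lemma~\ref{lemma:svf} (exponent $-a-3/2<-1$) applied to the second. Each piece is of order $(n\tau_n)^a L(1/(n\tau_n))$, with coefficients $1/(1-a)$ and $1/(a+1/2)$, whose sum is at most $A_0/(a(1-a))$ for some $A_0>1$ and $n$ large (the slack being more than enough to absorb the Karamata $(1+o(1))$ factors); multiplying by the reciprocal-denominator bound $e^{t_i}(1+o(1))$ gives part (1). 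In both parts the $o(1)$ originates solely in $n\tau_n\to 0$ and therefore does not depend on $i$.

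The main obstacle is this last polynomial estimate. Unlike the exponential case, $\int_0^\infty \gamma\pi(\gamma)\,d\gamma=\infty$, so one cannot crudely pull out a moment; the split at $1/(n\tau_n)$ together with careful use of the Karamata asymptotics --- verifying that the near-zero and fixed-range contributions are asymptotically dominated, and that the slowly varying factor emerges precisely as $L(1/(n\tau_n))$ with power $(n\tau_n)^a$ --- is where the real work lies. The remaining ingredients (the monotone bounds on $\rho_n\gamma(1+\rho_n\gamma)^{-3/2}$ and dominated convergence for the denominator) are routine, as is the bookkeeping that keeps every $o(1)$ free of $i$.
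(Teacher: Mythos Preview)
Your proposal is correct and follows essentially the same approach as the paper: both write $E(1-s_i\mid\tau_n,\mathbf{Y})$ as a ratio of integrals, bound the denominator below via dominated convergence, bound the numerator above by dropping the exponential, and then handle the polynomial case by splitting the $\gamma$-integral and applying Karamata's theorem (parts~\eqref{eq:svf_4} and~\eqref{eq:svf_5} of Lemma~\ref{lemma:svf}), and the exponential case by a first-moment bound. The only cosmetic differences are that the paper splits at $A_0$ and $A_0/(n\tau_n)$ into three pieces and uses the cruder tail bound $n\tau_n\gamma(1+n\tau_n\gamma)^{-3/2}\le 1$ on $[A_0/(n\tau_n),\infty)$, whereas you split at $1/(n\tau_n)$ into two pieces and use the slightly sharper $(\rho_n\gamma)^{-1/2}$; both routes yield the same order $(n\tau_n)^a L(1/(n\tau_n))$ with constants absorbed into $A_0$.
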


\begin{proof}
By Lebesgue dominated convergence Theorem,
\begin{align*}
E(1-s_{i} \, | \, \tau_n, \bY) &=  \frac{\int_{0}^{\infty}\frac{n\tau_{n}\gamma_{i}}{1+n\tau_{n}\gamma_{i}}\left(1+n\tau_{n}\gamma_{i}\right)^{-\frac{1}{2}}\exp\left\{ \frac{n\hat{\beta}_{i}^{2}}{2\sigma^{2}}\frac{n\tau_{n}\gamma_{i}}{1+n\tau_{n}\gamma_{i}}\right\} \pi\left(\gamma_{i}\right)d\gamma_{i}}{\int_{0}^{\infty}\left(1+n\tau_{n}\gamma_{i}\right)^{-\frac{1}{2}}\exp\left\{ \frac{n\hat{\beta}_{i}^{2}}{2\sigma^{2}}\frac{n\tau_{n}\gamma_{i}}{1+n\tau_{n}\gamma_{i}}\right\} \pi\left(\gamma_{i}\right)d\gamma_{i}}\\
& \leq  \frac{\int_{0}^{\infty}\left(n\tau_{n}\gamma_{i}\right)\left(1+n\tau_{n}\gamma_{i}\right)^{-\frac{3}{2}}\pi\left(\gamma_{i}\right)d\gamma_{i}}{\int_{0}^{\infty}\left(1+n\tau_{n}\gamma_{i}\right)^{-\frac{1}{2}}\pi\left(\gamma_{i}\right)d\gamma_{i}}\exp\left\{ \frac{n\hat{\beta}_{i}^{2}}{2\sigma^{2}}\right\}\\
& = \int_{0}^{\infty}\left(n\tau_{n}\gamma_{i}\right)\left(1+n\tau_{n}\gamma_{i}\right)^{-\frac{3}{2}}\pi\left(\gamma_{i}\right)d\gamma_{i}\exp\left\{ \frac{n\hat{\beta}_{i}^{2}}{2\sigma^{2}}\right\} (1+o(1)).
\end{align*}

We now consider the case that $\gamma_i$ has a prior from the polynomial class. 
By property \eqref{eq:svf_5} in Lemma \ref{lemma:svf}, there exists $A_0 \geq 1$ such that $\frac{\int_{A_0}^x t^{-aL(t)dt}}{x^{1-a}L(x)} \rightarrow \frac{1}{1-a}$ as $x \rightarrow \infty$. Therefore,
\begin{align*}
& \int_{A_0}^{\frac{A_0}{n\tau_n}} \left(n\tau_{n}\gamma_{i}\right)\left(1+n\tau_{n}\gamma_{i}\right)^{-\frac{3}{2}}\gamma_i^{-a-1}L(\gamma_i)d\gamma_i \\
\leq &~ n\tau_n \int_{A_0}^{\infty} \gamma_i^{-a} L(\gamma_i) d\gamma_i = \frac{n\tau_n}{1-a}\left(\frac{A_0}{n\tau_n}\right)^{1-a}L\left(\frac{A_0}{n\tau_n}\right)(1+o(1)) \leq ~\frac{A_0}{1-a}(n\tau_n)^a L\left(\frac{1}{n\tau_n}\right)(1+o(1)).
\end{align*}
Also, $$\int_0^{A_0} \left(n\tau_{n}\gamma_{i}\right)\left(1+n\tau_{n}\gamma_{i}\right)^{-\frac{3}{2}}\gamma_i^{-a-1}L(\gamma_i)d\gamma_i \leq A_0 n\tau_n \int_0^{\infty} \gamma_i^{-a-1}L(\gamma_i)d\gamma_i = A_0 n \tau_n,$$ and
\begin{align*}
&\int_{\frac{A_0}{n\tau_n}}^{\infty} \left(n\tau_{n}\gamma_{i}\right)\left(1+n\tau_{n}\gamma_{i}\right)^{-\frac{3}{2}}\gamma_i^{-a-1}L(\gamma_i)d\gamma_i \\
\leq & ~ \int_{\frac{A_0}{n\tau_n}}^{\infty} \gamma_i^{-a-1} L(\gamma_i) d\gamma_i = \frac{1}{a}\left(\frac{A_0}{n\tau_n}\right)^{-a} L\left(\frac{1}{n\tau_n}\right)(1+o(1)) \leq ~ \frac{A_0}{a}(n\tau_n)^a L\left(\frac{1}{n\tau_n}\right) (1+o(1)).
\end{align*}
Hence, 
\begin{align*}
&\int_{0}^{\infty}\left(n\tau_{n}\gamma_{i}\right)\left(1+n\tau_{n}\gamma_{i}\right)^{-\frac{3}{2}}\gamma_i^{-a-1}L(\gamma_i)d\gamma_{i}\\
\leq &~ \frac{A_0 (n\tau_n)^a}{a(1-a)} L\left(\frac{1}{n\tau_n}\right) \left[\frac{(n\tau_n)^{1-a}a(1-a)}{L(1/(n\tau_n))} + a(1+o(1)) + (1-a)(1+o(1))\right]\\
= & ~ \frac{A_0(n\tau_n)^a}{a(1-a)} L\left(\frac{1}{n\tau_n}\right)(1 + o(1)).
\end{align*}

If $\gamma_i$ has a prior in the exponential-tailed class and $\int_0^\infty \gamma_i \pi(\gamma_i) d\gamma_i < \infty$, then 
\begin{align*}
\int_{0}^\infty n\tau_n\gamma_i(1+n\tau_n\gamma_i)^{-3/2} \pi(\gamma_i) d\gamma_i \leq n\tau_n \int_0^{\infty}\gamma_i \pi(\gamma_i)d\gamma_i = Cn\tau_n.
\end{align*}

\end{proof}

% P( s > \eta)
\begin{lemma}\label{lemma:signal_2}
Suppose $n\tau_n \rightarrow 0$ as $n \rightarrow \infty$ and $\eta$, $q$ are arbitrary constants in $(0,1)$.
\begin{enumerate}
\item If $\gamma_i$ has a proper polynomial-tailed prior described in \eqref{eq:polynomial}, then
$$P\left( s_i > \eta \, | \tau_n, \bY\right) \leq \frac{(a+\frac{1}{2})(\eta q)^{-a-\frac{1}{2}}(1-\eta q)^a}{(n\tau_n)^{a}L\left(\frac{1}{n\tau_n}(\frac{1}{\eta q}-1)\right)} \exp\left\{-\frac{n\hat{\beta}_i^2}{2\sigma^2}\eta(1-q)\right\} (1+o(1)).$$
\item If $\gamma_i$ has a proper exponential prior described in \eqref{eq:exponential}, then for sufficient large $n$ (not depending on $i$), 
$$P\left(s_{i}>\eta \, | \, \tau_n, \bY\right)\leq 2b\left(\frac{n\tau_n}{1-\eta q}\right)^\frac{1}{2}\exp\left\{ \frac{2b}{n\tau_n}\left(\frac{1}{\eta q}-1\right)\right\} \exp\left\{ -\frac{n\hat{\beta}_{i}^{2}}{2\sigma^{2}}\eta\left(1-q\right)\right\}.$$
\end{enumerate}
\end{lemma}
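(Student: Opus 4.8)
The plan is to read $P(s_i>\eta\mid\tau_n,\bY)$ directly off the explicit posterior kernel of $\gamma_i$ that appears in the proof of Lemma~\ref{lemma:noise_2}: under the orthogonal design,
\[
\pi(\gamma_i\mid\tau_n,\bY)\ \propto\ (1+n\tau_n\gamma_i)^{-1/2}\exp\Big\{\tfrac{n\hat\beta_i^2}{2\sigma^2}\,\tfrac{n\tau_n\gamma_i}{1+n\tau_n\gamma_i}\Big\}\,\pi(\gamma_i).
\]
Since $s_i=(1+n\tau_n\gamma_i)^{-1}$ is strictly decreasing in $\gamma_i$, the event $\{s_i>\eta\}$ is the event $\{\gamma_i<c_\eta\}$ with $c_\eta=\frac{1}{n\tau_n}\big(\frac1\eta-1\big)$, so $P(s_i>\eta\mid\tau_n,\bY)=N/D$, where $N$ is the kernel above integrated over $(0,c_\eta)$ and $D$ is it integrated over $(0,\infty)$. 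The whole argument reduces to bounding $N$ from above and $D$ from below so that the two factors $\exp\{\tfrac{n\hat\beta_i^2}{2\sigma^2}\cdot\}$ combine into $\exp\{-\tfrac{n\hat\beta_i^2}{2\sigma^2}\eta(1-q)\}$.

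For the numerator, on $(0,c_\eta)$ the monotonicity of $x\mapsto x/(1+x)$ gives $\tfrac{n\tau_n\gamma_i}{1+n\tau_n\gamma_i}<1-\eta$ and $(1+n\tau_n\gamma_i)^{-1/2}\le1$; pulling out $\exp\{\tfrac{n\hat\beta_i^2}{2\sigma^2}(1-\eta)\}$ and using that $\pi$ integrates to one gives $N\le\exp\{\tfrac{n\hat\beta_i^2}{2\sigma^2}(1-\eta)\}$. For the denominator, I would truncate to the tail $\{\gamma_i\ge d\}$ with $d=\frac{1}{n\tau_n}\big(\frac1{\eta q}-1\big)$, the value that makes $\tfrac{n\tau_n d}{1+n\tau_n d}$ exactly $1-\eta q$; there $\tfrac{n\tau_n\gamma_i}{1+n\tau_n\gamma_i}\ge1-\eta q$, so $\exp\{\tfrac{n\hat\beta_i^2}{2\sigma^2}(1-\eta q)\}$ factors out and one is left to bound $\int_d^\infty(1+n\tau_n\gamma_i)^{-1/2}\pi(\gamma_i)\,d\gamma_i$ (polynomial) or a truncated version of it (exponential) from below. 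Note $d\to\infty$ because $n\tau_n\to0$, and $d$ does not involve $i$, which is what keeps all remainder terms uniform in $i$.

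In the polynomial case $\gamma_i^{-1/2}\pi(\gamma_i)=\gamma_i^{-a-3/2}L(\gamma_i)$ has exponent below $-1$, so after the crude bound $(1+n\tau_n\gamma_i)^{-1/2}\ge(1-\eta q)^{1/2}(n\tau_n\gamma_i)^{-1/2}$ the version of Karamata's theorem recorded in Lemma~\ref{lemma:svf}\eqref{eq:svf_4} evaluates $\int_d^\infty\gamma_i^{-a-3/2}L(\gamma_i)\,d\gamma_i=\frac{d^{-a-1/2}}{a+1/2}L(d)(1+o(1))$; substituting $d$ and collecting the powers of $n\tau_n$, $\eta q$ and $1-\eta q$ gives the stated lower bound on $D$, and dividing $N$ by it (with $L(d)=L(\frac1{n\tau_n}(\frac1{\eta q}-1))$ and the $\hat\beta_i^2$-exponentials cancelling to $\exp\{-\tfrac{n\hat\beta_i^2}{2\sigma^2}\eta(1-q)\}$) produces part~(1). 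In the exponential case $e^{-b\gamma_i}$ decays too fast to integrate over all of $[d,\infty)$, so one truncates instead to a window $[d,2d]$ of fixed multiplicative width: there $e^{-b\gamma_i}\ge e^{-2bd}$, $(1+n\tau_n\gamma_i)^{-1/2}$ is bounded below by a constant depending only on $\eta q$, and for $n$ large the slowly varying factor over $[d,2d]$ is controlled by Lemma~\ref{lemma:svf}. Keeping $e^{-2bd}$ in the denominator bound — hence $e^{2bd}=\exp\{\tfrac{2b}{n\tau_n}(\tfrac1{\eta q}-1)\}$ in $N/D$ — together with the resulting power $(n\tau_n)^{1/2}$ and the constant $2b$, and cancelling the $\hat\beta_i^2$-exponentials as before, gives part~(2); this is where the ``$n$ sufficiently large, not depending on $i$'' qualifier enters.

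The step I expect to be the crux is the choice of the truncation level $d=\frac1{n\tau_n}(\frac1{\eta q}-1)$: it must be taken large enough that $\tfrac{n\tau_n\gamma_i}{1+n\tau_n\gamma_i}$ is pushed above $1-\eta q$ — so that, after cancelling the numerator's $1-\eta$, the surviving exponent $(1-\eta)-(1-\eta q)=-\eta(1-q)$ is negative — yet still leave the residual integral of $(1+n\tau_n\gamma_i)^{-1/2}\pi(\gamma_i)$ (over $[d,\infty)$ for the polynomial tail, over $[d,2d]$ for the exponential tail) explicitly estimable; the exponential tail is the more delicate of the two precisely because the integral must be confined to a bounded-ratio window, which is what produces the extra blow-up factor $\exp\{\tfrac{2b}{n\tau_n}(\tfrac1{\eta q}-1)\}$ absent in the polynomial bound. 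The remaining work is routine bookkeeping: checking that the $(1+o(1))$ (polynomial case) and the ``$n$ large'' threshold (exponential case) depend only on $n\tau_n,\eta,q,a,b$ and the fixed function $L$ and not on $\hat\beta_i$, so that both bounds hold uniformly over $i$.
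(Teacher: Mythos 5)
Your proof is correct and follows essentially the same route as the paper's: the same ratio bound with the numerator integrated over $\left(0,\frac{1}{n\tau_n}(\frac{1}{\eta}-1)\right)$ and the denominator truncated to $\left(\frac{1}{n\tau_n}(\frac{1}{\eta q}-1),\infty\right)$, the same cancellation of the $\hat{\beta}_i^2$-exponentials into $\exp\{-\frac{n\hat{\beta}_i^2}{2\sigma^2}\eta(1-q)\}$, the same crude bound $(1+n\tau_n\gamma_i)^{-1/2}\geq(1-\eta q)^{1/2}(n\tau_n\gamma_i)^{-1/2}$, and the same appeal to property \eqref{eq:svf_4} of Lemma~\ref{lemma:svf} in the polynomial case. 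The only (immaterial) deviation is in the exponential-tail denominator, where you restrict to the window $[d,2d]$, whereas the paper keeps the full tail and writes $e^{-b\gamma_i}\gamma_i^{-1/2}L(\gamma_i)=e^{-2b\gamma_i}\left(e^{b\gamma_i}\gamma_i^{-1}\right)\left(\gamma_i^{1/2}L(\gamma_i)\right)$ and drops the last two (eventually $\geq 1$) factors; both yield the stated bound for $n$ sufficiently large.
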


\begin{proof}
For any $\eta,q\in\left(0,1\right)$, 
\begin{align}
P\left(s_{i}>\eta \, |\, \tau_n, \bY\right) & =  P\left(\gamma_{i}<\frac{1}{n\tau_{n}}\left(\frac{1}{\eta}-1 \right)\,\middle | \, \tau_n, \bY\right)\nonumber\\
 & \leq  \frac{\int_{0}^{\frac{1}{n\tau_{n}}\left(\frac{1}{\eta}-1\right)}\left(1+n\tau_{n}\gamma_{i}\right)^{-\frac{1}{2}}\exp\left\{ -\frac{n\hat{\beta}_{i}^{2}}{2\sigma^{2}}\cdot\frac{1}{1+n\tau_{n}\gamma_{i}}\right\} \pi\left(\gamma_{i}\right)d\gamma_{i}}{\int_{\frac{1}{n\tau_{n}}\left(\frac{1}{\eta q}-1\right)}^{\infty}\left(1+n\tau_{n}\gamma_{i}\right)^{-\frac{1}{2}}\exp\left\{ -\frac{n\hat{\beta}_{i}^{2}}{2\sigma^{2}}\cdot\frac{1}{1+n\tau_{n}\gamma_{i}}\right\} \pi\left(\gamma_{i}\right)d\gamma_{i}} \nonumber \\
 & \leq  \frac{\int_{0}^{\frac{1}{n\tau_{n}}\left(\frac{1}{\eta}-1\right)}\left(1+n\tau_{n}\gamma_{i}\right)^{-\frac{1}{2}}\pi\left(\gamma_{i}\right)d\gamma_{i}}{\int_{\frac{1}{n\tau_{n}}\left(\frac{1}{\eta q}-1\right)}^{\infty}\left(1+n\tau_{n}\gamma_{i}\right)^{-\frac{1}{2}}\pi\left(\gamma_{i}\right)d\gamma_{i}}\exp\left\{ -\frac{n\hat{\beta}_{i}^{2}}{2\sigma^{2}}\eta(1-q)\right\}.
 \label{eq:RHS1}
\end{align}
The numerator of the first factor in \eqref{eq:RHS1} is bounded by 1. For the denominator (denoted by $D$), we discuss
the two types of priors separately.

First consider the case that $\gamma_i$ has a proper polynomial-tailed prior.
By property \eqref{eq:svf_4} of Lemma \ref{lemma:svf}, $$\frac{\left(\frac{1}{n\tau_n}(\frac{1}{\eta q} - 1)\right)^{-a-\frac{1}{2}}L\left(\frac{1}{n\tau_n}(\frac{1}{\eta q}-1)\right)}{\int_{\frac{1}{n\tau_n}(\frac{1}{\eta q}-1)}^{\infty}\gamma_i^{-a-\frac{3}{2}} L(\gamma_i)d\gamma_i} \rightarrow a+\frac{1}{2},~\text{as}~n \rightarrow \infty.$$ Hence
\begin{align*}
D & \geq \left(\frac{1-\eta q}{n\tau_n}\right)^{\frac{1}{2}} \frac{L\left(\frac{1}{n\tau_n}(\frac{1}{\eta q}-1)\right)}{(a+\frac{1}{2})\left(\frac{1}{n\tau_n}(\frac{1}{\eta q}-1)\right)^{a+\frac{1}{2}}}(1+o(1))\\
& = \frac{(n\tau_n)^a}{a+1/2} (\eta q)^{a+\frac{1}{2}}(1-\eta q)^{-a} L\left(\frac{1}{n\tau_n}(\frac{1}{\eta q}-1)\right)(1+o(1)).
\end{align*}

If $\gamma_i$ has a proper exponential-tailed prior, 
\begin{align*}
D & = \int_{\frac{1}{n\tau_{n}}\left(\frac{1}{\eta q}-1\right)}^{\infty}\left(1+n\tau_{n}\gamma_{i}\right)^{-\frac{1}{2}}\exp\left\{ -b\gamma_{i}\right\} L\left(\gamma_{i}\right)d\gamma_{i}\\
 & = \int_{\frac{1}{n\tau_{n}}\left(\frac{1}{\eta q}-1\right)}^{\infty}\left(\frac{n\tau_{n}\gamma_{i}}{1+n\tau_{n}\gamma_{i}}\right)^{\frac{1}{2}}\left(n\tau_{n}\right)^{-\frac{1}{2}}\gamma_{i}^{-\frac{1}{2}}\exp\left\{ -b\gamma_{i}\right\} L\left(\gamma_{i}\right)d\gamma_{i}\\
 & \geq \int_{\frac{1}{n\tau_{n}}\left(\frac{1}{\eta q}-1\right)}^{\infty}\left(1-\eta q\right)^{\frac{1}{2}}\left(n\tau_{n}\right)^{-\frac{1}{2}}\gamma_{i}^{-\frac{1}{2}}\exp\left\{ -b\gamma_{i}\right\} L\left(\gamma_{i}\right)d\gamma_{i}\\
 & = \int_{\frac{1}{n\tau_{n}}\left(\frac{1}{\eta q}-1\right)}^{\infty}\left(1-\eta q\right)^{\frac{1}{2}}\left(n\tau_{n}\right)^{-\frac{1}{2}}\exp\left\{ -2b\gamma_{i}\right\} \left(\exp\left\{ b\gamma_{i}\right\}\gamma_{i}^{-1}\right)\left(\gamma_{i}^{1/2}L\left(\gamma_{i}\right)\right)d\gamma_{i}
\end{align*}
Since $\exp(b\gamma_i)\gamma_i^{-1} \rightarrow \infty$ and $\gamma_i^{1/2}L(\gamma_i) \rightarrow \infty$ as $\gamma_i\rightarrow \infty$, for sufficiently large $n$,
\begin{align*}
D &\geq \int_{\frac{1}{n\tau_{n}}\left(\frac{1}{\eta q}-1\right)}^{\infty} \left(\frac{1-\eta q}{n\tau_{n}}\right)^{\frac{1}{2}}\exp\left\{ -2b\gamma_{i}\right\} d\gamma_{i} = \frac{1}{2b}\left(\frac{1-\eta q}{n\tau_n}\right)^{\frac{1}{2}}\exp\left\{ -\frac{2b}{n\tau_{n}}\left(\frac{1}{\eta q}-1\right)\right\}.
\end{align*}
\end{proof}

% proof of proposition 1
\begin{proof}[{\bf Proof of Proposition \ref{prop:s_consistency}}]
It is clear that
\begin{align}
E\left(1-s_{i}|\bY\right) & =  \frac{\int_{0}^{\infty}\frac{n\tau_{n}\gamma_{i}}{1+n\tau_{n}\gamma_{i}}\left(1+n\tau_{n}\gamma_{i}\right)^{-\frac{1}{2}}\exp\left\{ \frac{n\hat{\beta}_{i}^{2}}{2\sigma^{2}}\frac{n\tau_{n}\gamma_{i}}{1+n\tau_{n}\gamma_{i}}\right\} \pi\left(\gamma_{i}\right)d\gamma_{i}}{\int_{0}^{\infty}\left(1+n\tau_{n}\gamma_{i}\right)^{-\frac{1}{2}}\exp\left\{ \frac{n\hat{\beta}_{i}^{2}}{2\sigma^{2}}\frac{n\tau_{n}\gamma_{i}}{1+n\tau_{n}\gamma_{i}}\right\} \pi\left(\gamma_{i}\right)d\gamma_{i}} \nonumber\\
& \leq \frac{\int_{0}^{\infty}\left(n\tau_{n}\gamma_{i}\right)\left(1+n\tau_{n}\gamma_{i}\right)^{-\frac{3}{2}}\pi\left(\gamma_{i}\right)d\gamma_{i}}{\int_{0}^{\infty}\left(1+n\tau_{n}\gamma_{i}\right)^{-\frac{1}{2}}\pi\left(\gamma_{i}\right)d\gamma_{i}}\exp\left\{ \frac{n\hat{\beta}_{i}^{2}}{2\sigma^{2}}\right\}.\label{eq:1}
\end{align}
By Lebesgue  dominated convergence Theorem, the numerator and denominator in \eqref{eq:1} converges to 0 and 1 respectively as $n\rightarrow \infty$. If $i \not\in \mathcal{A}$, $n\hat\beta_i^2 = O_p(1)$. Therefore, $E(1-s_i \, | \, \tau_n, \bY) \overset{p}{\rightarrow} 0$, as $n \rightarrow \infty$ by Slutsky's theorem.

For any $0 <\epsilon \leq 1$,
$E(s_i \, | \, \tau_n, \bY) = \int_{0}^{\epsilon/2} s_i \,p(s_i \, | \, \tau_n, \bY) ds_i + \int_{\epsilon/2}^{1} s_i \,p(s_i \, | \, \tau_n, \bY) ds_i \leq {\epsilon}/{2} + P(s_i > \epsilon/2 \, | \, \tau_n, \bY).$ Thus 
$P\left(E(s_i \, | \, \tau_n, \bY) \geq \epsilon\right) \leq P\left(P(s_i > {\epsilon}/{2} \, | \, \tau_n, \bY) \geq {\epsilon}/{2}\right).$
If $\gamma_i$ has a polynomial-tailed prior, using the first part of Lemma \ref{lemma:signal_2} with $\eta=\epsilon/2$, the above inequality yields 
\begin{align*}
P(E(s_i \, | \, \tau_n, \bY) \geq \epsilon) &\leq P\left( \frac{(a+\frac{1}{2})(\eta q)^{-a-\frac{1}{2}}(1-\eta q)^a}{(n\tau_n)^{a}L\left(\frac{1}{n\tau_n}(\frac{1}{\eta q}-1)\right)} \exp\left\{-\frac{n\hat{\beta}_i^2}{2\sigma^2}\eta(1-q)\right\} > \epsilon/2\right) (1+o(1))\\
& = P\left(\hat{\beta}_i^2 < \frac{2\sigma^2}{\eta q}\left[\frac{c_1}{n} - \frac{a\log(n\tau_n)}{n}\left\{1+\frac{\log L(\frac{1}{n\tau_n}(\frac{1}{\eta q}-1))}{a\log(n\tau_n)}\right\}\right]\right)(1+o(1)),
\end{align*}
where $c_1$ is a constant that does not depend on $n$.
By property \eqref{eq:svf_2} in Lemma \ref{lemma:svf} and our assumptions, the terms in the bracket converge to zero as $n \rightarrow \infty$. Since $\hat{\beta}_i \overset{p}{\rightarrow} \beta_i^{0} \neq 0$, we have $P(E(s_i \, | \, \tau_n, \bY) \geq \epsilon) \rightarrow 0$, as $n\rightarrow \infty$.

If $\gamma_i$ has an exponential-tailed prior, by the second part of Lemma \ref{lemma:signal_2}, the assumption that $n\tau_{n}\rightarrow0$, $n^{2}\tau_{n}\rightarrow\infty$ as
$n\rightarrow\infty$ implies that $P\left(s_{i}>\eta \, | \, \tau_n, \bY\right)\stackrel{p}{\rightarrow}0$
for any $\eta>0$. Therefore $P(P(s_i > {\epsilon}/{2} \, | \, \tau_n, \bY) \geq {\epsilon}/{2}) \rightarrow 0$ and hence $P(E(s_i \, | \, \tau_n, \bY) \geq \epsilon) \rightarrow 0$, as $n \rightarrow \infty$.
\end{proof}

% proof of theorem 1
\begin{proof}[{\bf Proof of Theorem \ref{thm:tuning_poly}}]
We first prove the variable selection consistency part. It is clear that
\begin{equation*}
P(\mathcal{A}_n\neq \mathcal{A}) \leq 
\sum_{i\,\not\in\, \mathcal{A}} P\left(E(1 - s_i \, |\, \tau_n, \bY) \geq \frac{1}{2}\right) +  \sum_{i \, \in \, \mathcal{A}}P\left(E(1-s_i \, | \, \tau_n, \bY) < \frac{1}{2}\right).
\end{equation*}
Since $p_0 = |\mathcal{A}|$ does not depend on n, by Proposition \ref{prop:s_consistency}, the second term on the right hand side of the above inequality goes to zero as $n\rightarrow \infty$.
If $i \not\in \mathcal{A}$, by Lemma \ref{lemma:noise_2} and the fact that $\sqrt {n}\hat{\beta}_i$ has a standard normal distribution,   
$$P\left(E(1 - s_i \, |\, \tau_n, \bY) > \frac{1}{2}\right) \leq P\left(\exp\left(\frac{n\hat{\beta}_i^2}{2}\right)\frac{A_0(n\tau)^a}{a(1-a)}L\left(\frac{1}{n\tau}\right)\xi_n > 1/2\right) = 2\left[1 - \Phi (\sqrt{M_n})\right],$$ where $\xi_n$, not depending on $i$, is a generic term that converges to 1 as $n \rightarrow \infty$, and $M_n = 2\log\left(\frac{C}{(n\tau)^aL(1/(n\tau))\xi_n}\right)$ with $C$ being a generic constant. Noticing that the right hand side of the above inequality does not depend on $i$, $\sum_{i\,\not\in\, \mathcal{A}} P\left(E(1 - s_i \, |\, \tau_n, \bY) \geq \frac{1}{2}\right) \leq p_n P\left(E(1 - s_i \, |\, \tau_n, \bY) > \frac{1}{2}\right)$. Therefore, the proof of the variable selection consistency part will be complete if we can show $2p_n\left[1 - \Phi (\sqrt{M_n})\right]$ converge to zero as $n \rightarrow \infty$. In fact, by property \eqref{eq:svf_3} in Lemma \ref{lemma:svf}, $M_n \rightarrow \infty$, so
\begin{equation*}
2p_n\left[1 - \Phi (\sqrt{M_n})\right] \leq \frac{2\phi(\sqrt{M_n})}{\sqrt{M_n}} =Cp_n(n\tau_n)^\epsilon \frac{(n\tau_n)^{a-\epsilon}L(1/(n\tau_n))}{\sqrt{\log(1/(n\tau_n))}}(1 + o(1)).
\end{equation*}
Again by property \eqref{eq:svf_3} in Lemma \ref{lemma:svf}, $(n\tau_n)^{a-\epsilon} L(1/(n\tau_n))\rightarrow 0$ as $n\rightarrow \infty$. Therefore $2p_n\left[1 - \Phi (\sqrt{M_n})\right] \rightarrow 0$ if $p_n(n\tau_n)^\epsilon \rightarrow$, as $n\rightarrow \infty$.

Now we show the asymptotic normality part. For any
$i\in\mathcal{A}$, we have $\hat{\beta}_{i}\overset{p}{\rightarrow}\beta_{i}^{0}\neq0$,
$\sqrt{n}\left(\hat{\beta}_{i}-\beta_{i}^{0}\right)\stackrel{d}{\rightarrow}N\left(0,\sigma^{2}\right)$
and
$$
\sqrt{n}\left(\hat{\beta}_{i}^{HT}-\beta_{i}^{0}\right)=\sqrt{n}\left(\hat{\beta}_{i}-\beta_{i}^{0}\right)-\sqrt{n}E(s_{i} \, | \, \tau_n, \bY) \hat{\beta}_{i}-\sqrt{n}\hat{\beta}_i^{PM}I\left(E(1-s_i \, | \, \tau_n, \bY)\leq {1}/{2}\right).
$$
Since the third term on the right hand side converges to zero in probability by Proposition \ref{prop:s_consistency}, the proof of the asymptotic normality part will be complete if we can show that $\sqrt{n}E(s_i \, | \, \tau_n, \bY)$ converge to zero in probability. 
In fact, for any $\epsilon > 0$, by similar arguments as in the proof of Proposition \ref{prop:s_consistency}, 
$$P( \sqrt{n}E(s_i \, | \, \tau_n, \bY)  \geq \epsilon) \leq P(P(s_i \geq \epsilon/(2\sqrt{n}) \,|\, \tau_n, \bY) > \epsilon/(2\sqrt{n})).$$
In Lemma \ref{lemma:signal_2}, let $\eta = \eta_n = {\epsilon}/({2\sqrt{n}})$. Then
\begin{align*}
P\left(P\left(s_i \geq \frac{\epsilon}{2\sqrt{n}} \, \middle | \, \tau_n, \bY\right)>\frac{\epsilon}{2\sqrt{n}}\right) &\leq P\left(\frac{\left(a+\frac{1}{2}\right)\left(1-\frac{\epsilon q}{2\sqrt{n}}\right)^a\exp\left(-\frac{n\hat{\beta}_i^2\epsilon(1-q)}{4\sigma^2\sqrt{n}}\right)}{(n\tau_n)^a\left(\frac{\epsilon q}{2\sqrt{n}}\right)^{a+\frac{1}{2}} L\left(\frac{1}{n\tau_n}\left(\frac{2\sqrt{n}}{\epsilon q}-1\right)\right)} > \frac{\epsilon}{2\sqrt{n}}\right)(1+o(1)) \\
& = P(\hat{\beta}_i^2 < c_n)(1+o(1)),
\end{align*}
where $c_n = d_2 n^{-1/2}\left\{ \log\left(d_1n^{3/4}\right) + a\log\left(\frac{1}{n\tau_n}\left(\frac{2\sqrt{n}}{\epsilon q}-1\right)\right)\left[1-\frac{\log L\left(\frac{1}{n\tau_n}\left(\frac{2\sqrt{n}}{\epsilon q}-1\right)\right)}{a\log \left(\frac{1}{n\tau_n}\left(\frac{2\sqrt{n}}{\epsilon q}-1\right)\right)}\right] \right\}$. Since $c_n \rightarrow 0$ and $\hat{\beta}_i \overset{p}{\rightarrow}\beta_i^0 \neq 0$, we have $P(E(s_i \, | \, \tau_n, \bY) \geq \epsilon/\sqrt{n}) \rightarrow 0$.
\end{proof}

% polynomial hyper prior on tau
\begin{proof}[{\bf Proof of Corollary \ref{thm:prior_poly}}]
Since $s_i = (1+n\tau\gamma_i)^{-1}$ is a decreasing function in $\tau$, 
$$E(1-s_i \, | \, \bY) = \int_{\xi_n}^{\psi_n} E(1-s_i \, | \, \tau, \bY) \pi_n^{\tau}(\tau) d\tau \leq E(1-s_i \, | \, \tau = \psi_n, \bY).$$
Similarly, $$P(s_i < \eta \, |\, \bY) \leq P(s_i < \eta \, | \, \tau= \psi_n, \bY),$$ and $$P(s_i > \eta \, |\, \bY) \leq P(s_i > \eta \, | \, \tau= \xi_n, \bY).$$ The rest of the proof follows the proof of Theorem \ref{thm:tuning_poly}.
\end{proof}

% exponential variable selection consistency
\begin{proof}[{\bf Proof of Theorem \ref{thm:tuning_exp_vs}}] Similar to the proof of the variable selection consistency part of Theorem \ref{thm:tuning_poly}, the proof will be complete if we can show $\sum_{i\,\not\in\, \mathcal{A}} P\left(E(1 - s_i \, | \, \tau_n, \bY) > 1/2\right) \rightarrow 0$, as $n \rightarrow \infty$.
By Lemma \ref{lemma:noise_2}, 
$$P\left(E(1 - s_i|\bY) > \frac{1}{2}\right) \leq P\left(C\exp\left(\frac{n\hat{\beta}_i^2}{2\sigma^2}\right)n\tau_n\xi_n' > \frac{1}{2}\right) = P\left(n\hat\beta_i^2 > M_n'\right) = 2\left[1 - \Phi(\sqrt{M_n'})\right],$$ where $\xi_n'$, not depending on $i$, is a generic term that converges to 1 as $n\rightarrow \infty$ and $M_n' = -2\log(2Cn\tau_n\xi_n')$. If $n\tau_n \rightarrow 0$, then $M_n' \rightarrow \infty$. Hence,
\begin{align*}
\sum_{i\,\not\in\, \mathcal{A}} P\left(E(1 - s_i \, | \, \tau_n, \bY) > 1/2\right) \leq 2p_n\left[1 - \Phi(\sqrt{M_n'})\right] \sim \frac{2p_n\phi(\sqrt{M_n'})}{\sqrt{M_n'}} = \frac{2Cp_nn\tau_n\xi_n'}{\sqrt{-\pi\log(2Cn\tau_n\xi_n')}} \rightarrow 0,
\end{align*}
if $\frac{p_n n\tau_n}{\sqrt{\log(n\tau_n)}} \rightarrow 0$, as $n\rightarrow \infty$.
\end{proof}

\begin{proof}[{\bf Proof of Proposition \ref{prop:not_vs}}]
Notice that
\begin{align*}
E\left(1-s_{i} \, | \, \tau_n, \bY\right) &=  \frac{\int_{0}^{\infty}\frac{n\tau_{n}\gamma_{i}}{1+n\tau_{n}\gamma_{i}}\left(1+n\tau_{n}\gamma_{i}\right)^{-\frac{1}{2}}\exp\left\{ \frac{n\hat{\beta}_{i}^{2}}{2\sigma^{2}}\frac{n\tau_{n}\gamma_{i}}{1+n\tau_{n}\gamma_{i}}\right\} \pi\left(\gamma_{i}\right)d\gamma_{i}}{\int_{0}^{\infty}\left(1+n\tau_{n}\gamma_{i}\right)^{-\frac{1}{2}}\exp\left\{ \frac{n\hat{\beta}_{i}^{2}}{2\sigma^{2}}\frac{n\tau_{n}\gamma_{i}}{1+n\tau_{n}\gamma_{i}}\right\} \pi\left(\gamma_{i}\right)d\gamma_{i}} \\
& \geq \frac{\int_{0}^{\infty}\gamma_{i}\left(\frac{1}{n\tau_{n}}+\gamma_{i}\right)^{-\frac{3}{2}}\pi\left(\gamma_{i}\right)d\gamma_{i}}{\int_{0}^{\infty}\left(\frac{1}{n\tau_{n}}+\gamma_{i}\right)^{-\frac{1}{2}}\pi\left(\gamma_{i}\right)d\gamma_{i}}\exp\left\{ -\frac{n\hat{\beta}_{i}^{2}}{2\sigma^{2}}\right\}.
\end{align*}
Let $h_{n}{=}\frac{\int_{0}^{\infty}\gamma_{i}\left(\frac{1}{n\tau_{n}}+\gamma_{i}\right)^{-\frac{3}{2}}\pi\left(\gamma_{i}\right)d\gamma_{i}}{\int_{0}^{\infty}\left(\frac{1}{n\tau_{n}}+\gamma_{i}\right)^{-\frac{1}{2}}\pi\left(\gamma_{i}\right)d\gamma_{i}}$.
If $n\tau_n \rightarrow c \in (0, \infty]$ and $\int_0^{\infty} \gamma_i^{-\frac{1}{2}}\pi(\gamma_i)d\gamma_i < \infty$, by applying LDCTh to both the numerator and the denominator of $h_n$, we have $h_n$ converges to some positive constant that depends on $c$ and $\pi(\cdot)$ as $n\rightarrow \infty$.
Then, for any $i\notin \mathcal{A}$,
$$P(\mathcal{A}_n = \mathcal{A}) \leq P\left(E(1 - s_i \, | \, \tau_n, \bY) \leq 1/2\right)\leq P\left(h_{n}\exp\left\{ -\frac{n\hat{\beta}_{i}^{2}}{2\sigma^{2}}\right\} <\frac{1}{2}\right).$$
Note that $h_{n}\exp\left\{ -\frac{n\hat{\beta}_{i}^{2}}{2\sigma^{2}}\right\} $ converges in distribution to some distribution $Z$ with support on $(0,1)$,
so 
$$P\left(h_{n}\exp\left\{ -\frac{n\hat{\beta}_{i}^{2}}{2\sigma^{2}}\right\} <\frac{1}{2}\right)\rightarrow P\left(Z<\frac{1}{2}\right)<1.$$ Thus the HT procedure does not achieve variable selection consistency.
\end{proof}

\begin{proof}[{\bf Proof of Proposition \ref{prop:not_an}}]
Similar to the proof of Theorem \ref{thm:tuning_poly}, 
\begin{equation*}
\sqrt{n}\left(\hat{\beta}_{i}^{HT}-\beta_{i}^{0}\right) = \sqrt{n}\left(\hat{\beta}_{i}-\beta_{i}^{0}\right) - \sqrt{n}E(s_{i} \, | \, \tau_n, \bY)\hat{\beta}_{i} - \sqrt{n}\hat{\beta}_i^{PM}I(E(1-s_i \, | \, \tau_n, \bY) \leq 1/2).
\end{equation*}
For $i \in \mathcal{A}$, the third term in the right hand side converge to zero in probability. The first term has a normal distribution with mean 0 and variance $\sigma^2$.
The posterior density function of $s_{i}$ is
$$p\left(s_{i}\, | \, \tau_n, \bY\right)\propto s_{i}^{-\frac{3}{2}}\exp\left\{ -\frac{n\hat{\beta}_{i}^{2}}{2\sigma^{2}}s_{i}-\frac{b}{n\tau_{n}s_{i}}\right\}L\left(\frac{1}{n\tau_n}\left(\frac{1}{s_i}-1\right)\right) ,~0\leq s_{i}\leq1.$$
It is obvious that $\frac{m}{M}\tilde{S}^{(i)}_n \leq E(s_i \, | \, \tau_n, \bY) \leq \frac{M}{m}\tilde{S}^{(i)}_n$ almost surely, where $$\tilde{S}_n^{(i)}=\frac{\int_0^1 s_{i}^{-\frac{1}{2}}\exp\left\{ -\frac{n\hat{\beta}_{i}^{2}}{2\sigma^{2}}s_{i}-\frac{b}{n\tau_{n}s_{i}}\right\}ds_i}{\int_0^1 s_{i}^{-\frac{3}{2}}\exp\left\{ -\frac{n\hat{\beta}_{i}^{2}}{2\sigma^{2}}s_{i}-\frac{b}{n\tau_{n}s_{i}}\right\}ds_i}.$$
Notice that $$s_{i}^{-\frac{3}{2}}\exp\left\{ -\frac{n\hat{\beta}_{i}^{2}}{2\sigma^{2}}s_{i}-\frac{b}{n\tau_{n}s_{i}}\right\}I\left(0<s_{i}<\infty\right) = \left(\frac{\lambda_D}{2\pi}\right)^{-\frac{1}{2}}\exp\left(-\frac{\lambda_D}{\mu_D}\right)f(s_i;\lambda_D,\mu_D),$$ 
where $\lambda_D=2b/n\tau_{n}$, $\mu=\frac{\sqrt{2b}\sigma}{\abs{\hat{\beta}_{i}}}\left(n^{2}\tau_{n}\right)^{-\frac{1}{2}}$ and $f(x;\lambda, \mu) = \left(\frac{\lambda}{2\pi x^3}\right)^{\frac{1}{2}}\exp\left\{\frac{-\lambda\left(x-\mu\right)^2}{2\mu^2 x}\right\}I\left(0<x<\infty\right)$ is the probability density function of an Inverse Gaussian (IG) distribution with mean $\mu$ and shape parameter $\lambda$. 
According to \citet{shuster1968inverse}, the cdf of IG$(\mu,\lambda)$ can be expressed as $$F(x;\lambda,\mu) = \Phi\left(\sqrt{\frac{\lambda}{\mu}}\left(\frac{x}{\mu}-1\right)\right) + \exp\left(\frac{2\lambda}{\mu}\right)\Phi\left(-\sqrt{\frac{\lambda}{\mu}}\left(\frac{x}{\mu}+1\right)\right).$$ Thus the denominator of $\tilde{S}_n^{(i)}$ can be written as  $\left(\frac{\lambda_D}{2\pi}\right)^{-\frac{1}{2}}\exp\left(-\frac{\lambda_D}{\mu_D}\right)F(1;\lambda_D, \mu_D)$. With the transformation $s_i = 1/t_i$ and similar arguments for the denominator, the numerator of $\tilde{S}_n^{(i)}$ can be expressed as $\left(\frac{\lambda_N}{2\pi}\right)^{-\frac{1}{2}}\exp\left(-\frac{\lambda_N}{\mu_N}\right)F(1;\lambda_N, \mu_N)$, where $\lambda_N = n\hat\beta_i^2/\sigma^2$ and $\mu_N = \frac{n\sqrt{\tau_n}\abs{\hat\beta_i}}{\sqrt{2b\sigma}}$. By some simple calculations,
\begin{equation*}
\tilde{S}_n^{(i)} = \frac{\sqrt{2b\sigma}\left\{\Phi(b_n) - \exp(c_n)\Phi(-d_n)\right\}}{n\sqrt{\tau_n}\abs{\hat\beta_i}\left\{\Phi(b_n) + \exp(c_n)\Phi(-d_n)\right\}},
\end{equation*}
where $b_n = \sqrt{\frac{2b}{n\tau_n}}\left(\frac{n\sqrt{\tau_n}\abs{\hat\beta_i}}{\sqrt{2b}\sigma}-1\right)$, $c_n = \frac{2\sqrt{2b}\abs{\hat\beta_i}}{\sqrt{\tau_n}\sigma}$, and $d_n = \sqrt{\frac{2b}{n\tau_n}}\left(\frac{n\sqrt{\tau_n}\abs{\hat\beta_i}}{\sqrt{2b}\sigma}+1\right)$.

If $n^2\tau_n \rightarrow \infty$ as $n\rightarrow \infty$, $b_n \overset{p}{\rightarrow} +\infty$ and thus $\Phi(b_n) \overset{p}{\rightarrow} 1$. Combining  this with the fact that $\Phi(b_n) + \exp(c_n)\Phi(-d_n)$ is in $[0,1]$ since it equals $F(1, \lambda_D,\mu_D)$, we have $\exp(c_n)\Phi(-d_n) \overset{p}{\rightarrow} 0$. As a result, $\sqrt{n}\tilde{S}_n^{(i)} \overset{p}{\rightarrow} \infty$.

If $n^2\tau_n\rightarrow c \in (0,\infty)$ as $n \rightarrow \infty$, the limit of $b_n$ can be $+\infty$, $-\infty$, or some constant $r$. We will discuss the three cases separately. If $b_n \overset{p}{\rightarrow} +\infty$, by similar arguments as in the case $n^2\tau_n\rightarrow \infty$, we have $\sqrt{n}\tilde{S}_n^{(i)} \overset{p}{\rightarrow} \infty$. If $b_n \overset{p}{\rightarrow} -\infty$,
\begin{equation*}
\frac{\exp(c_n)\Phi(-d_n)}{\Phi(b_n)} \sim \frac{\exp(c_n)\phi(d_n)/d_n}{\phi(-b_n)/(-b_n)} = \frac{\sqrt{2b}\sigma-n\sqrt{\tau_n}\abs{\hat\beta_i}}{\sqrt{2b}\sigma+n\sqrt{\tau_n}\abs{\hat\beta_i}} \overset{p}{\rightarrow} \frac{\sqrt{2b}\sigma-\sqrt{c}\abs{\beta_i^0}}{\sqrt{2b}\sigma+\sqrt{c}\abs{\beta_i^0}}. 
\end{equation*}
Therefore, $\tilde{S}_n^{(i)} \overset{p}{\rightarrow} 1$ and $\sqrt{n}\tilde{S}_n^{(i)} \overset{p}{\rightarrow} \infty$. If $b_n \overset{p}{\rightarrow} r$, since $c_n - \frac{1}{2}d_n^2 + \frac{1}{2}b_n^2 = 0$, we have $c_n - \frac{1}{2}d_n^2 \overset{p}{\rightarrow} -\frac{1}{2}r^2$. Therefore, $\exp(c_n)\Phi(-d_n) \sim \exp(c_n)\phi(d_n)/d_n \overset{p}{\rightarrow} 0$ and $\sqrt{n}\tilde{S}_n^{(i)} \overset{p}{\rightarrow} \infty.$

If $n^2\tau_n \rightarrow 0$, $b_n \overset{p}{\rightarrow} -\infty$. By the famous inequality $\frac{x^2}{1+x^2} \leq \frac{x(1-\Phi(x))}{\phi(x)} \leq 1$,
\begin{equation*}
\tilde{S}_n^{(i)} \geq \frac{\sqrt{2b}\sigma}{n\sqrt{\tau_n}|\hat\beta_i|} \frac{\frac{b_n^2}{1+b_n^2}\frac{\phi(b_n)}{b_n} + \exp(c_n)\frac{\phi(d_n)}{d_n}}{\frac{\phi(b_n)}{b_n} - \exp(c_n)\frac{\phi(d_n)}{d_n}}\\
=1 - \frac{1}{2}\frac{1+\frac{\sqrt{2b}\sigma}{n\sqrt{\tau_n}|\hat\beta_i|}}{ 1 + \frac{2b}{n\tau_n}\left(\frac{n\sqrt{\tau_n}|\hat\beta_i|}{\sqrt{2b}\sigma}-1\right)^2} \overset{p}{\rightarrow} 1.
\end{equation*}
In all the cases, $\sqrt{n}\tilde{S}_n^{(i)} \overset{p}{\rightarrow} \infty$, as $n \rightarrow \infty$. Thus $\sqrt{n} E(s_i \, | \, \tau_n, \bY) \overset{p}{\rightarrow} \infty$ and $\sqrt{n}(\hat\beta_i^{PM} - \hat\beta_i^0) \not\rightarrow \mbox{N}(0, \sigma^2)$.
\end{proof}

\begin{proof}[{\bf Proof of Theorem \ref{thm:tuning_exp_an}}]
Following the proof of Theorem \ref{prop:not_an}, 
$n\sqrt{\tau_n}\tilde{S}_n^{(i)} \overset{p}{\rightarrow} \sqrt{2b}\sigma/|\beta_i^0|$. This completes the proof.
\end{proof}

\begin{proof}[{\bf Proof of Theorem \ref{thm:tuning_exp}}]
Theorem \ref{thm:tuning_exp} is a direct result of Propositions \ref{prop:not_vs} and \ref{prop:not_an}.
\end{proof}

\bibliographystyle{apalike}
\bibliography{vs}

\end{document}